\def\lie{{\mathcal{L}}}
\def\hi{\mathcal{H}}
\def\o{\mathcal{O}}
\def\pv{\partial_{v}}
\def\vh{\mathcal{V}_{\hh}}
\newcommand{\li}{\mbox{$\lie \mkern-9.5mu /$}}
\def\f12{\frac 1 2}
\def\s{\mathcal{S}}
\def\a{\alpha}
\def\b{\beta}
\def\ga{\gamma}
\def\hh{\mathcal{H}}
\def\m{\mathcal{M}}
\newcommand{\nabb}{\mbox{$\nabla \mkern-13mu /$\,}}
\newcommand{\lapp}{\mbox{$\triangle \mkern-13mu /$\,}}
\newcommand{\epsi}{\mbox{$\epsilon \mkern-7.4mu /$\,}}
\newcommand{\gi}{\mbox{$g \mkern-8.8mu /$\,}}
\newcommand{\di}{\mbox{$d \mkern-9.2mu /$\,}}
\newcommand{\divv}{\mbox{$\div \mkern-16mu /$\,\,}}
\def\f12{\frac 1 2}
\def\div{\text{div}}
\newtheorem{definition}{Definition}[section]
\newtheorem{remark}{Remark}[section]
\newtheorem{lemma}{Lemma}[section]
\newtheorem{theorem}{Theorem}[section]
\newtheorem{proposition}{Proposition}[section]
\newtheorem{corollary}{Corollary}[section]
\begin{document}
\title{On a foliation-covariant elliptic operator on null hypersurfaces}


\author[1,2]{Stefanos Aretakis}
\affil[1]{Princeton University, Department of Mathematics, Fine Hall, NJ 08544, USA}
\affil[2]{Institute for Advanced Study, Einstein Drive, Princeton, NJ 08540, USA}
\date{October 4, 2013}

\maketitle

\begin{abstract}

We introduce a new elliptic operator on null hypersurfaces of four-dimensional Lorentzian manifolds. This operator depends on the first and second fundamental forms of the sections of a foliation of the null hypersurface and its novelty originates from its \textit{covariant} transformation under change of foliation. It thus provides at any point an elliptic structure intimately connected with the geometry of the null hypersurface, independent of the choice of a  specific section through that point. No analytic or algebraic symmetries or other conditions are imposed on the metric. The spectral properties of this elliptic operator are relevant to  the evolution of the wave equation, and in particular, the existence of conservation laws along null hypersurfaces. 

\end{abstract}

\tableofcontents

\section{Introduction}
\label{sec:Introduction}


Elliptic operators on null hypersurfaces have played a major role in a wide spectrum of problems in Lorentzian geometry (see, for example, \cite{christab,feffermanhirachi,galloway2000}). Refoliating a null hypersurface has also been proved to be a very strong tool in several contexts (see, for example, \cite{arichbondimass,klr,sauter2008}).

In this paper, we introduce a new elliptic operator on null hypersurfaces of a general four-dimensional Lorentzian manifold which exhibits a specific covariance property under change of foliation of the null hypersurfaces. As is shown in the companion paper \cite{aretakisglue}, the spectral properties of this operator play an important role in the evolution of the wave equation on Lorentzian manifolds, and in particular, in the existence of conservation laws along null hypersurfaces.

Let $\hh$ be a null hypersurface of a four-dimensional Lorentzian manifold $(\m,g)$. We assume that $\hh$ can be foliated by sections $\big(S_{v}\big)_{v\in\mathbb{R}}$ diffeomorphic to $\mathbb{S}^{2}$, i.e.~embedded two-dimensional Riemannian submanifolds diffeomophic to $\mathbb{S}^{2}$ and intersecting transversally the null generators of $\hh$. As we shall see in Section \ref{sec:TheGeometryOfNullHypersurfaces},  any foliation $\s$ of $\hh$ is completely determined by the choice of a section $S_{0}$ of $\hh$,  a null geodesic vector field $L_{geod}$ on $\hh$ and a function $\Omega\in C^{\infty}(\hh)$. We then write 
\[\s=\Big\langle S_{0}, L_{geod}, \Omega  \Big\rangle. \]
Given a foliation $\s=\big(S_{v}\big)_{v\in\mathbb{R}}=\left\langle S_{0}, L_{geod}, \Omega\right\rangle$ of a null hypersurface $\hh$, we define the following  second order linear operator $\mathcal{A}^{\s}:C^{\infty}(\hh)\rightarrow\mathbb{R}$ 
\begin{equation}
\begin{split}
\mathcal{A}^{\mathcal{S}}\psi=&\lapp\psi+\Big[ Z-2\nabb\log\phi\Big]\cdot\nabb\psi+\left[\phi\cdot\lapp\frac{1}{\phi}-\nabb\log\phi\cdot Z+\Omega^{-2}\cdot w\right]\cdot\psi. 
\end{split}
\label{aintro}
\end{equation}
\begin{equation}
\begin{split}
Z=&2\zeta^{\sharp}+\nabb\log\Omega^{2},\\w=2\divv\big(\Omega^{2}\zeta\big)+&\Omega^{2}\cdot L_{geod}\, (\Omega tr\underline{\chi})+\frac{1}{2}(\Omega tr\chi) (\Omega tr\underline{\chi}),
\end{split}
\label{zwintro}
\end{equation}
where $\lapp$ and $\nabb$ denote the induced Laplacian and gradient on the sections $S_{v}$, respectively, and where $\zeta$ is the torsion, $tr\chi,\, tr\underline{\chi}$ are the null mean curvatures and $\phi$ is the conformal factor of $S_{v}$. See Section \ref{sec:TheGeometryOfNullHypersurfaces} for the relevant definitions.

Clearly, the operator $\mathcal{A}^{\s}$ is tangential to the sections $S_{v}$ of $\s$ and the restriction  \[\mathcal{A}_{v}^{\s}:=\left.\mathcal{A}^{S}\right|_{S_{v}}:C^{\infty}(S_{v})\rightarrow\mathbb{R}\]  is an elliptic operator on the Riemannian manifold $S_{v}$ which depends only on the geometry of the foliation $\s$, i.e~the first and the second fundamental forms of the sections of $\s$ with respect to the ambient manifold $\m$.  
 
We define the following space
\begin{equation}
\vh=\Big\{f\in C^{\infty}(\hh)\, :\, f \text{ is constant along the null generators of }\hh\Big\}.
\label{eq:vh}
\end{equation}
Let now $\mathcal{S}'=\big(S'_{v'}\big)_{v'\in\mathbb{R}}=\big\langle S_{0}', L_{geod}', \Omega'  \big\rangle$ be anotherfoliation of $\hh$. Since, $L_{geod},L_{geod}'$ satisfy the geodesic equation, there exists a function $f\in \vh$ such that \[L_{geod}'=f^{2}\cdot L_{geod}.\] One can consider the associated geometric elliptic operator $\mathcal{A}^{\s'}$ defined in an identical way to $\mathcal{A}^{\s}$ by simply replacing the geometric quantities associated to the foliation $\s$ with those of the foliation $\s'$. 

Given a point $p\in\hh$ there exist two sections $S_{v}\in \s$ and $S'_{v'}\in \s'$ such that $p\in S_{v}\cap S'_{v'}$.  Since the operators $\left.\mathcal{A}^{\s}\right|_{p}$ and $\left.\mathcal{A}^{\s'}\right|_{p}$ are tangential to $S_{v}$ and $S'_{v'}$, respectively, the only way to compare $\left.\mathcal{A}^{\s}\right|_{p}$ and $\left.\mathcal{A}^{\s'}\right|_{p}$ is by identifying the sections $S_{v}$ and $S'_{v'}$ via the flow of the null generators, or equivalently, by restricting to functions $\Psi\in\vh$.
 \begin{figure}[H]
   \centering
		\includegraphics[scale=0.109]{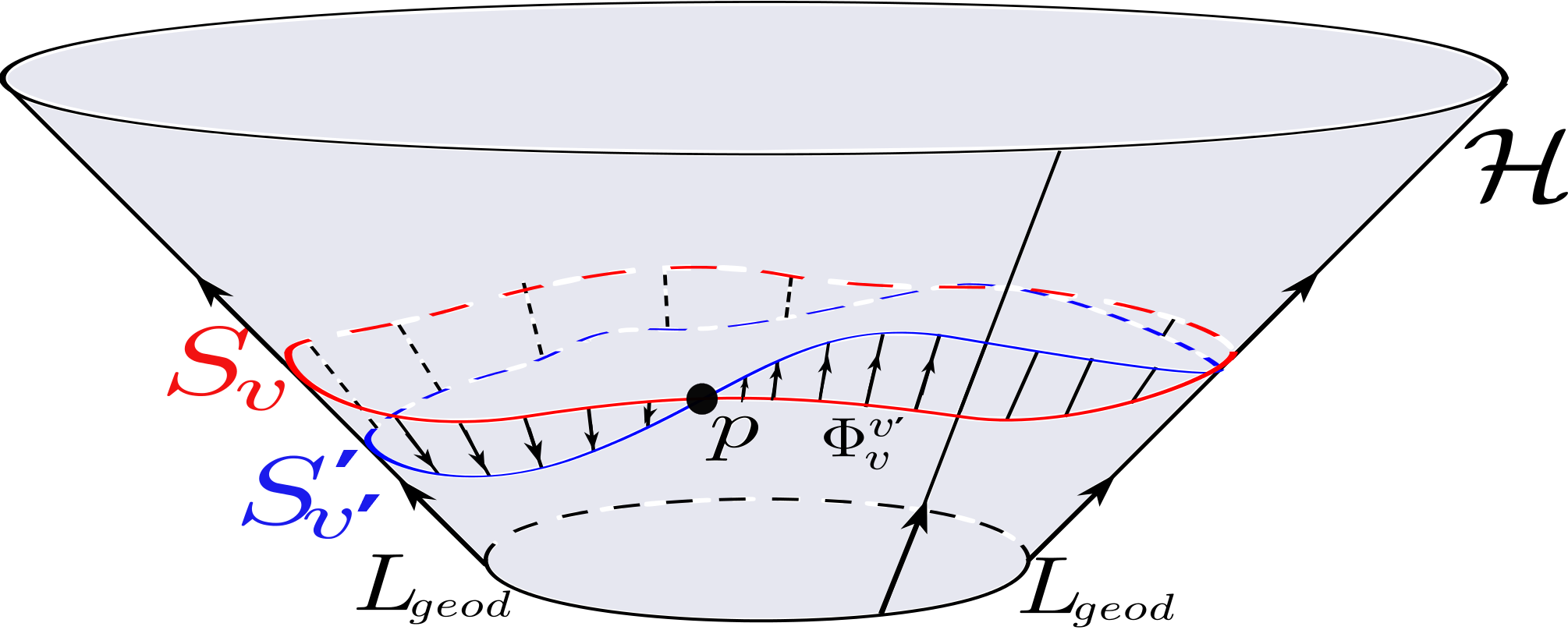}
	\label{fig:nullsection11212jlkj12121etromono}
\end{figure}
The main result of the present paper is the following covariance transformation of the operator $\mathcal{A}^{\s}$ under change of foliation:  For all functions $\Psi\in\vh$ we have
\begin{equation}
\mathcal{A}^{\mathcal{S}}\Psi=\frac{1}{f^{2}}\cdot\mathcal{A}^{\mathcal{S}'}
\left(f^{2}\cdot\Psi\right)
\label{eq:intromainequation}
\end{equation}
on $\hh$. 
Therefore, if we restrict to foliations for which the geodesic vector field $L_{geod}$ is fixed (and hence $f=1$), and hence allow only the ``initial'' section $S_{0}$ and the null lapse function $\Omega$ to change then we obtain that at each $p\in S_{v}\cap S'_{v'}$  the restrictions $\mathcal{A}_{v}^{\s}$ and $\mathcal{A}_{v'}^{\s'}$ are exactly the same  modulo identifying the sections $S_{v}$ and $S'_{v'}$ via $\Phi_{v}^{v'}$, i.e.
\[\big(\Phi_{v}^{v'}\big)^{*}\mathcal{A}_{v'}^{\s'}=\mathcal{A}_{v}^{\s} \ \text{  at  }p, \]
where the pullback operator $\big(\Phi_{v}^{v'}\big)^{*}\mathcal{A}_{v'}^{\s'}$ is the operator on $S_{v}$  defined such that $\Big(\big(\Phi_{v}^{v'}\big)^{*}\mathcal{A}_{v'}^{\s'}\Big)(\psi)=\mathcal{A}_{v'}^{\s'}\Big(\big(\Phi_{v}^{v'}\big)_{*}\psi\Big),$
for all $\psi\in C^{\infty}\big(S_{v}\big)$. Here the diffeomorphism $\Phi_{v}^{v'}:S_{v}\rightarrow S'_{v'}$ is defined such that if $q\in S_{v}$, then $\Phi_{v}^{v'}(q)$ is the intersection of $S_{v'}'$ and the null generator passing through $q$. 

For simplicity, we restrict to null hypersurfaces with spherical sections. Note that our method makes the use of the topology explicit and hence other topologies can be readily treated. Furthermore, our result can be generalized to all higher dimensions.  No conditions, apart  from smoothness, are imposed on the background metric. The precise result is given by Theorem \ref{maintheorem} of Section \ref{sec:ChangeOfFoliationII}.

A brief outline of the paper is as follows: In Section \ref{sec:TheGeometryOfNullHypersurfaces} we introduce the basic geometric set-up and in Section \ref{sec:ChangeOfFoliationII} we introduce the elliptic operator $\mathcal{A}^{\mathcal{S}}$ and prove the main result of the paper. In Section \ref{sec:KillingHorizons} we present some applications in the context of  black hole backgrounds. Finally, in Section \ref{sec:EpilogueEllipticOperatorsOnNullHypersurfaces} we present a general discussion of foliation-covariant elliptic operators on null hypersurfaces which might be relevant to other contexts.

\section{The geometry of null hypersurfaces}
\label{sec:TheGeometryOfNullHypersurfaces}

Let $\hh$ be a null hypersurface in a four-dimensional Lorentzian manifold $(\m,g)$. Then $\hh$ is generated (ruled) by null geodesics, the so-called \textit{null generators}, whose tangent $L$ is normal to $\hh$. A \textit{section} $S$ of $\hh$ is a two-dimensional submanifold of $\hh$ which intersects each null generator of $\hh$ transversally and is thus manifestly a Riemannian manifold equipped with the induced Riemannian metric which we will denote by $\gi$. We assume that all sections of $\hh$ are diffeomorphic to the 2-sphere $\mathbb{S}^{2}$. In fact, we will later construct an explicit diffeomorphism from $S$ to $\mathbb{S}^{2}$ which will be very important for our applications. 
 \begin{figure}[H]
   \centering
		\includegraphics[scale=0.114]{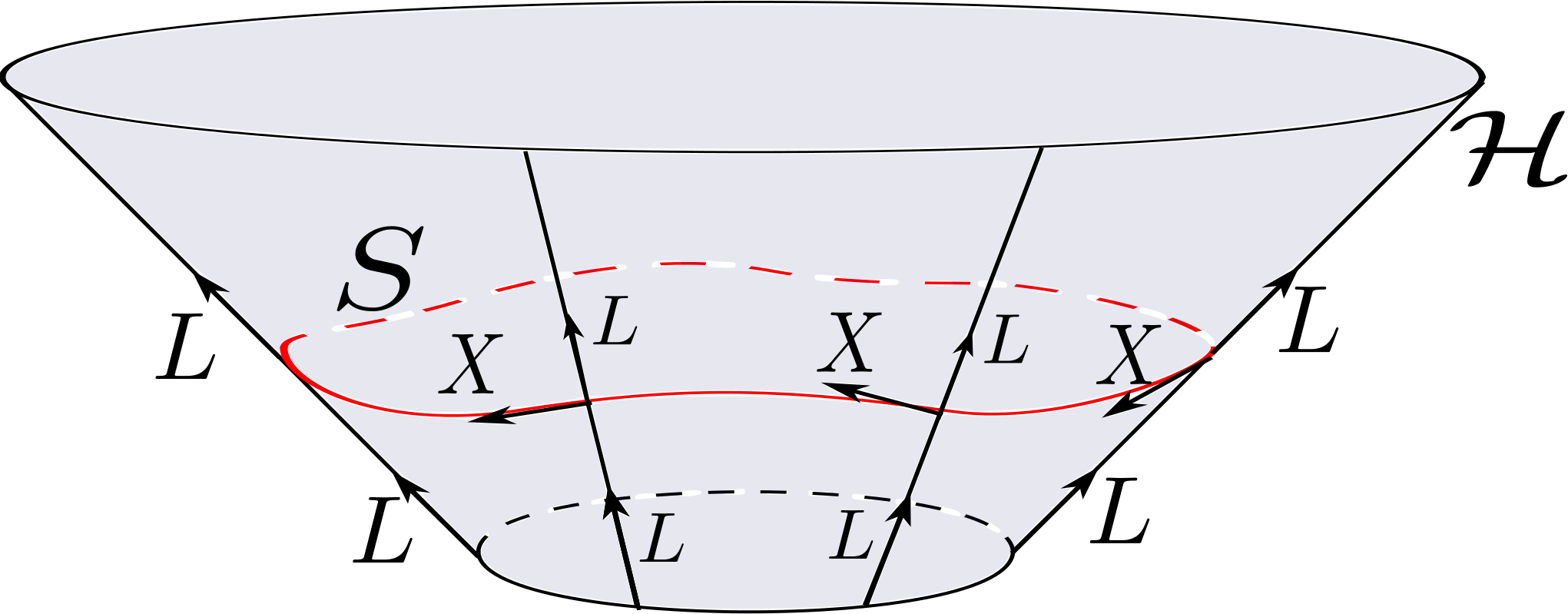}
	\label{fig:nullsection}
\end{figure}

\newpage

\noindent\textbf{Null Foliations}
\bigskip
 
A \textit{foliation} $\mathcal{S}$ of $\hh$ is a collection of sections $S_{v}$, smoothly varying in $v$, such that $\cup_{v}S_{v}=\hh$. We will show that any foliation is uniquely determined by the choice of one section, say $S_{0}$, the choice of a null tangential to $\hh$ vector field $\left.L_{geod}\right|_{S_{0}}$ restricted on $S_{0}$ and a function $\Omega$ on $\hh$. Indeed, we extend $\left.L_{geod}\right|_{S_{0}}$  to a  null vector field tangential to the null generators of $\hh$ such that
\[\nabla_{{L}_{geod}}{{L}_{geod}}=0.\]We then define the vector field
\begin{equation}
L=\Omega^{2}\cdot L_{geod}
\label{eq:l}
\end{equation}
on $\hh$ and consider the affine parameter $v$ of $L$ such that 
\[Lv=1, \text{ with } v=0 \text{ on } S_{0}.\]
The level sets $S_{v}$  of $v$ on $\hh$ are precisely the leaves of the foliaton $\mathcal{S}$. We use the notation
\begin{equation}
\mathcal{S}=\left\langle S_{0}, \left.L_{geod}\right|_{S_{0}}, \Omega \right\rangle.
\label{foliationdef}
\end{equation}
We also define $\underline{L}_{geod}$ on $\hh$ to be null normal to $S_{v}$, conjugate to $\hh$ and normalized such that
\[g\big(L_{geod},\underline{L}_{geod}\big)=-\Omega^{-2}.\]

\bigskip

\noindent\textbf{The Diffeomorphisms $\Phi_{v}$}

\bigskip

We can construct a  diffeomorphism $\Phi_{v}$ from any sphere $S_{v}$ to  $S_{0}$ as follows: If $p\in S_{v}$, then  $\Phi_{v}(p)\in S_{0}$ is defined to be the intersection of $S_{0}$ and the null generator of $\hh$ passing through $p$. We can also consider a diffeomorphism $\Phi$ from $S_{0}$ to $\mathbb{S}^{2}$ and compose $\Phi_{v}$ with $\Phi$ to obtain a diffeomorphism from $S_{v}$ to $\mathbb{S}^{2}$. 

Let $(\theta^{1},\theta^{2})$ be coordinates on $S_{0}$. The diffeomorphisms $\Phi_{v}$ allow us to construct a coordinate system $(v,\theta^{1},\theta^{2})$ as follows: The point $p\in\hh$ is assigned the coordinates $(v,\theta^{1},\theta^{2})$ if $p\in S_{v}$ and the coordinates of $\Phi_{v}(p)$ are precisely $(\theta^{1},\theta^{2})$.
 \begin{figure}[H]
   \centering
		\includegraphics[scale=0.06]{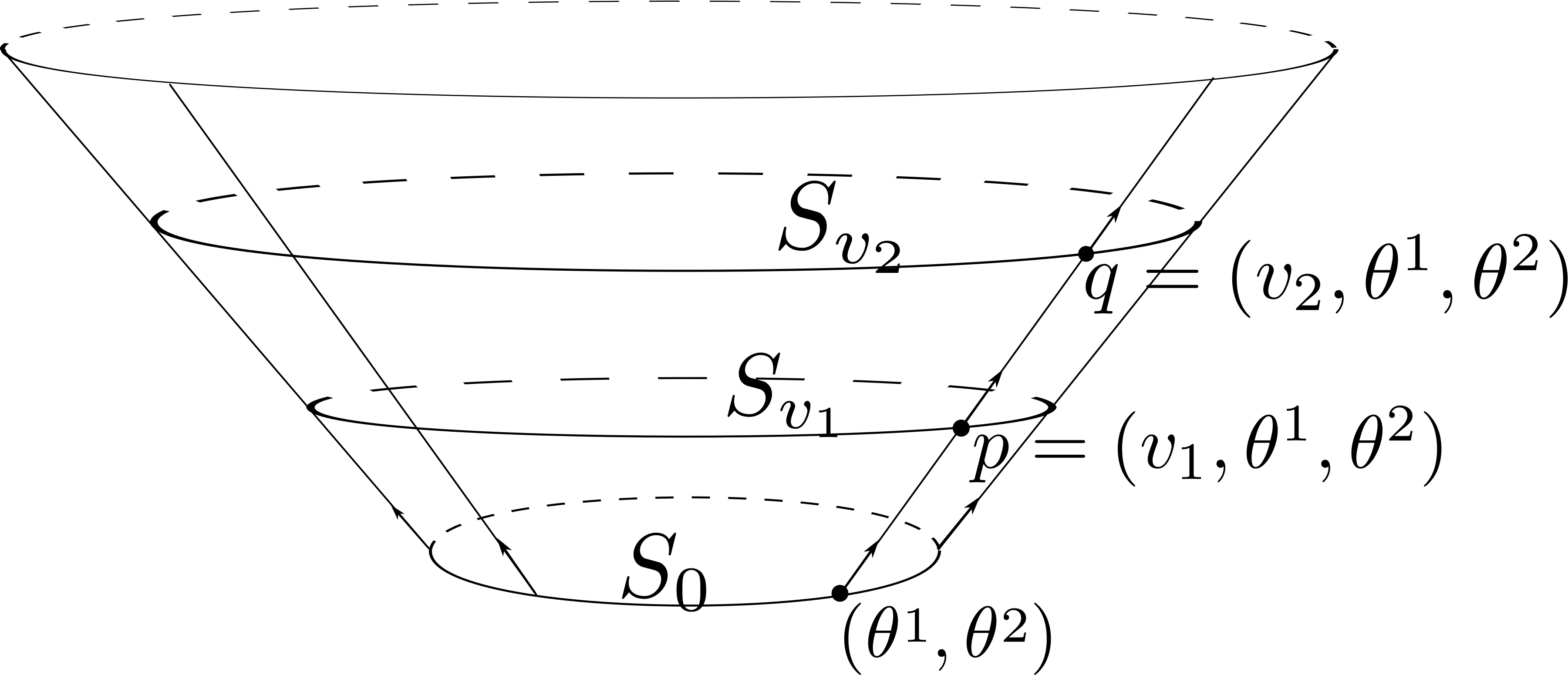}
	\label{fig:p45h}
\end{figure}Moreover, the diffeomorpshisms $\Phi_{v}$ allow us to equip all surfaces $S_{v}$ with the standard round metric which we denote by $\gi_{\mathbb{S}^{2}}$. 


\bigskip

\noindent\textbf{Null Frames}
\bigskip
 
\noindent If $\left\{e_{1},e_{2}\right\}=\big(e_{A}\big)_{A=1,2}$ is an arbitrary frame on the spheres $S_{v}$, then we have the following null frames:
\begin{itemize}
	\item 
\textbf{Geodesic frame:} $(e_{1},e_{2},L_{geod}, \underline{L}_{geod})$,
\item
\textbf{Equivariant frame:} $(e_{1},e_{2},L,\underline{L})$,
\item
\textbf{Normalized frame:} $(e_{1},e_{2}, e_{3},e_{4}).$

\end{itemize}
where \[e_{3}=\Omega \underline{L}_{geod}=\frac{1}{\Omega}\underline{L}, \ \ \ \ e_{4}=\Omega L_{geod}=\frac{1}{\Omega}{L}.\]
\begin{figure}[H]
   \centering
		\includegraphics[scale=0.135]{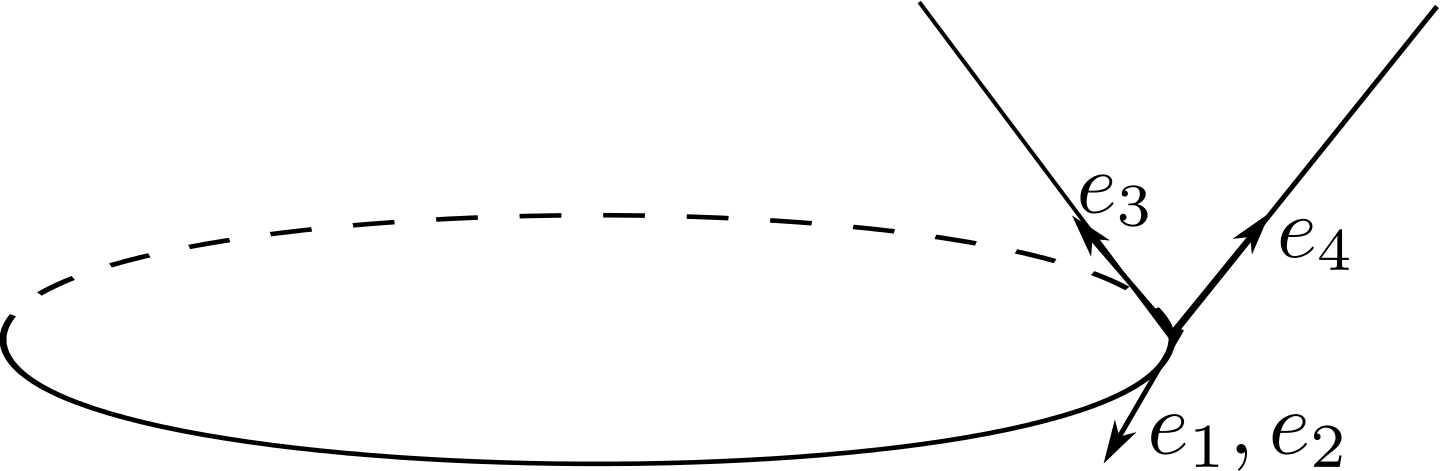}
	\label{fig:p3344}
\end{figure} 
Note that $e_{3},e_{4}$ satisfy the normalization properties
\begin{equation}
g(e_{3},e_{4})=-1
\label{eq:}
\end{equation}
and \begin{equation}
g\big(L_{geod},\underline{L}\big)=-1.
\label{normalizationtransversal}
\end{equation}
Note also that  $L=\pv$, where $\pv$ denotes the vector field with respect to the coordinate system $(v,\theta^{1},\theta^{2})$.

\bigskip

\noindent\textbf{The conformal geometry and conformal factor}
\bigskip

The conformal class of $\gi$ contains a unique representative (metric) $\hat{\gi}$ such that $\sqrt{\hat{\gi}}=\sqrt{\gi_{\mathbb{S}^{2}}}$, where $\gi_{\mathbb{S}^{2}}$ is as defined above. Equivalently, $\gi$ and $\hat{\gi}$ are such that the induced volume forms on $S_{v}$ are equal. Since $\gi$ and $\hat{\gi}$ are conformal there is a conformal factor such that $\gi=\phi^{2}\cdot \hat{\gi}$. Then, $\sqrt{\gi}=\phi^{2}\sqrt{\hat{\gi}}=\phi^{2}\sqrt{\gi_{\mathbb{S}^{2}}}$ and hence $\phi^{2}= \frac{\sqrt{\gi}}{\sqrt{\gi_{\mathbb{S}^{2}}}}. $
Therefore,
\begin{equation}
\phi=\frac{\sqrt[4]{\gi}}{\sqrt[4]{\gi_{\mathbb{S}^{2}}}}.
\label{phi}
\end{equation}
Note that $\phi$ is a smooth function on the sphere $S_{v}$ and does not depend on the choice of the coordinate system. Note also that for a spherically symmetric metric we have $\phi=r$, where $r$ is the radius.

\bigskip

\noindent\textbf{Connection Coefficients}

\bigskip

We consider the \textbf{normalized frame} $(e_{1},e_{2},e_{3},e_{4})$ defined above.  We define the connection coefficients with respect to this frame to be the smooth functions $\Gamma^{\lambda}_{\mu\nu}$
such that 
\[\nabla_{e_{\mu}}e_{\nu}=\Gamma^{\lambda}_{\mu\nu}e_{\lambda}, \ \ \lambda, \mu, \nu\in\left\{1,2,3,4\right\}\]
Here $\nabla$ denotes the connection of the spacetime metric $g$. We are mainly interested in the case where at least one of the indices $\lambda,\mu,\nu$ is either 3 or 4 (otherwise, we obtain the Christoffel symbols with respect to the induced metric $\gi$).  Following \cite{DC09,christab}, these coefficients  are completely determined by the following components:
\medskip

\noindent\textbf{The components} $\chi,\underline{\chi},\eta,\underline{\eta}, \omega,\underline{\omega},\zeta$:
\smallskip
\begin{equation}
\begin{split}
&\chi_{AB}=g(\nabla_{A}e_{4},e_{B}), \ \ \ \ \ \underline{\chi}_{AB}=g(\nabla_{A}e_{3},e_{B}),\\
&\ \eta_{A}=g(\nabla_{3}e_{4},e_{A}),\ \ \ \ \  \ \,  \underline{\eta}_{A}=g(\nabla_{4}e_{3},e_{A}),\\
&\ \ \ \omega=-g(\nabla_{4}e_{4},e_{3}), \ \ \ \  \underline{\omega}=-g(\nabla_{3}e_{3},e_{4}),\\
& \ \ \ \ \ \ \ \ \ \ \ \ \ \ \ \ \ \ \ \ \  \zeta_{A}=g(\nabla_{A}e_{4},e_{3})
\end{split}
\label{concoef}
\end{equation}
where $\big(e_{A}\big)_{A=1,2}$ is an arbitrary frame on the spheres $S_{v}$ and $\nabla_{\mu}=\nabla_{e_{\mu}}$. Note that $\underline{\zeta}=-\zeta$. The covariant tensor fields $\chi,\underline{\chi},\eta,\underline{\eta}$ and $\zeta$ are only defined on $T_{x}S_{v}$. \textbf{We can naturally extend these to tensor fields to be defined on $T_{x}\m$ by simply letting their value to be zero if they act on $e_{3}$ or $e_{4}$. Such tensor fields will in general be called \underline{$S$-tensor fields}. Note that a vector field is an $S$-vector field if it is tangent to the spheres $S_{v}$.}

The connection coefficients $\Gamma$ can be recovered by the following relations:
\begin{equation}
\begin{split}
\nabla_{A}e_{B}=\nabb_{A}e_{B}+&\chi_{AB}e_{3}+\underline{\chi}_{AB}e_{4},\\
\nabla_{3}e_{A}=\nabb_{3}e_{A}+\eta_{A}e_{3},& \ \ \ \nabla_{4}e_{A}=\nabb_{4}e_{A}+\underline{\eta}_{A}e_{4},\\
\nabla_{A}e_{3}=\underline{\chi}_{A}^{\ \ \sharp B}e_{B}+\zeta_{A}e_{3},& \ \ \ \nabla_{A}e_{4}={\chi}_{A}^{\ \ \sharp B}e_{B}-\zeta_{A}e_{4},\\
\nabla_{3}e_{4}=\eta^{\sharp A}e_{A}-\underline{\omega}e_{4},& \ \ \ \nabla_{4}e_{3}=\underline{\eta}^{\sharp A}e_{A}-\omega e_{3},\\
\nabla_{3}e_{3}=\underline{\omega}e_{3},& \ \ \ \nabla_{4}e_{4}=\omega e_{4},
\end{split}
\label{connectioncoef}
\end{equation}

\bigskip

\noindent\textbf{Curvature Components}

\bigskip

We next decompose the Riemann curvature $R$ in terms of the normalized null frame. First, we define the following components, which contain at most two S-tangential components (and hence at least 2 null components):
\begin{equation}
\begin{split}
\a_{AB}=R_{A4B4},&\ \ \ \ \underline{\a}_{AB}=R_{A3B3},\\
\beta_{A}=R_{A434},&\ \ \ \ \underline{\beta}_{A}=R_{A334},\\
\rho= R_{3434}, & \ \ \ \sigma=\frac{1}{2}\epsi^{AB}R_{AB34}.
\end{split}
\label{curvcompdeflist}
\end{equation}
Note that $R(\cdot,\cdot, e_{3},e_{4})$, when restricted on $T_{x}S_{v}$, is an antisymmetric form and hence collinear to the volume form $\epsi$ on $S_{v}$. Furthermore, if 
\[(*R)_{3434}=*\rho,\]
then $*\rho= 2\sigma$. Here,  the dual $*R$ of the Riemann curvature is defined to be the (0,4) tensor:
\[(*R)_{\a\b\gamma\delta}=\epsilon_{\mu\nu\a\beta}\,R^{\mu\nu}_{\ \ \  \gamma\delta}.\]
Clearly, the (0,2) $S$-tensor fields $\a,\underline{\a}$ are symmetric. Note that if the Einstein equations $Ric(g)=0$ are satisfied, then all the remaining curvature components can be expressed in terms of the above components. 

\newpage

\noindent\textbf{Remarks:}
\medskip

\textbf{1.} Recall that the second fundamental form of a manifold $S$ embedded in a manifold $\m$ is defined to be the symmetric $(0,2)$ tensor field $II$ such that for each $x\in S$ we have
\[II_{x}:T_{x}S\times T_{x}S\rightarrow (T_{x}S)^{\perp}, \]
where the $\perp$ is defined via the decomposition $T_{x}\m=T_{x}S\oplus(T_{x}S)^{\perp}$. Specifically, if $X,Y\in T_{x}S$ then \[II_{x}(X,Y)=\big(\nabla_{X}Y\big)^{\perp}\] and hence\[\nabla_{X}Y=\nabb_{X}Y+II(X,Y).\]
Here $\nabb$ denotes the induced connection on $S$ (which is taken by projecting the spacetime connection $\nabla$ on $T_{x}S$).

 The $S$-tensor fields $\chi,\underline{\chi}$ give us the projections of $II_{AB}$ on $e_{3}$ and $e_{4}$, respectively. Indeed
 \[II(X,Y)=\chi(X,Y)e_{3}+\underline{\chi}(X,Y)e_{4}.\]
 
 For this reason we will refer to 
 $\chi,\underline{\chi}$ as the \textit{null second fundamental forms} of $S_{v}$.   One can easily verify that $\chi$ and $\underline{\chi}$ are symmetric (0,2) $S$-tensor fields. Indeed, a simple calculation shows that if $X,Y$ are $S$-tangent vector fields then
 \begin{equation*}
 \begin{split}
  \chi(X,Y)-\chi(Y,X)=g(e_{4},[X,Y]) ,\\ \underline{\chi}(X,Y)-\underline{\chi}(Y,X)=g(e_{3},[X,Y]).
 \end{split}
 \end{equation*}
 Hence, $\chi,\underline{\chi}$ are symmetric if and only if $[X,Y]\perp e_{3}$ and $[X,Y]\perp e_{4}$ and thus if and only if $\left\langle e_{3},e_{4}\right\rangle^{\perp}\ni [X,Y]\in TS_{v}$. The symmetry of $\chi,\underline{\chi}$ is thus equivalent to the integrability of the orthogonal complement $\left\langle e_{3},e_{4}\right\rangle^{\perp}$.
 
  Furthermore, we can decompose $\chi$ and $\underline{\chi}$ into their trace and traceless parts by
 \begin{equation}
\chi=\hat{\chi}+\frac{1}{2}(tr\chi)\gi, \ \ \ \
\underline{\chi}=\hat{\underline{\chi}}+\frac{1}{2}(tr\underline{\chi})\gi.
\label{tracechi}
\end{equation}
The trace of the $S$-tensor fields $\chi,\underline{\chi}$ (and more general $S$-tensor fields) is taken with respect to the induced metric $\gi$. The trace $tr\chi$ is known as the  \textit{expansion} and the component $\hat{\chi}$ is called the \textit{shear} of $S_{v}$ with respect to $\hh$.

\medskip

\textbf{2.} Note also that $\omega= \nabla_{4} (\log\Omega)=$ and $\underline{\omega}= \nabla_{3} (\log\Omega)$.

\medskip

\textbf{3.} Let $X$ be a vector tangential to a given sphere $S_{v}$ at a point $x$. Then, if we extend $X$ along the null generator $\gamma$ of $C_{u}$ passing through $x$ according to the Jacobi equation $[L,X]=[\Omega e_{4},X]=0$, then we obtain an S-tangent vector field along $\gamma$. Note that in this case we obtain
\[\nabla_{4}X=\nabla_{X}e_{4}+\big(\nabla_{X}\log\Omega\big)e_{4}.\] On the other hand, if we simply extend $X$ such that $[e_{4},X]=0$ then, although $X$ will be tangential to $C_{u}$, $X$ will not be tangential to the sections $S_{v}$ of $C_{u}$. This is because the sections $S_{v}$ are the level sets of the optical functions $u,v$ which in  turn are the affine parameters of the vector fields $\underline{L},L$, respectively. 

\medskip

\textbf{4.} The $S$ 1-form $\zeta$ is known as the \textit{torsion}. 
If $\di$ denotes the exterior derivative on $S_{v}$ then the $S$ 1-forms $\eta,\underline{\eta}$ are related to $\zeta$ via
\begin{equation*}
\begin{split}\eta=\zeta+\di(\log\Omega),\ \ \ \ \underline{\eta}=-\zeta+\di{\log\Omega}.
\end{split}
\end{equation*}
\begin{proof}
Let $X$ be $S$-tangential and extend along the null generator of $C_{u}$ according to the Jacobi equation, then 
\begin{equation}\underline{\eta}(x)=g(\nabla_{4}e_{3},X)=-g(e_{3},\nabla_{4}X)=-g(e_{3},\nabla_{X}e_{4})+\nabla_{X}(\log\Omega)=-\zeta(X)+\nabla_{X}\log\Omega.\label{mmm}\end{equation}
We  similarly show the analogous relation for $\eta$.\end{proof}
The above also imply that
\[\zeta= \frac{1}{2}(\eta-\underline{\eta}), \ \ \ \ \di\log\Omega=\frac{1}{2}(\eta+\underline{\eta}).\]
The 1-forms $\eta,\underline{\eta}$ can be regarded as the torsion of the null hypersurfaces with respect to the geodesic vector fields. Indeed, the previous relations imply
\[\eta_{A}=\Omega^{2}g(\nabla_{A}L_{geod}, \underline{L}_{geod}).\]

\medskip

\textbf{5.} We have
\begin{equation}
\begin{split}
[L,\underline{L}]&=\nabla_{L}\underline{L}-\nabla_{\underline{L}}L=\Omega\cdot\Big(\nabla_{4}(\Omega e_{3})-\nabla_{3}(\Omega e_{4})\Big)\\
&=\Omega^{2}\cdot\Big(\nabla_{4}e_{3}-\nabla_{3}e_{4}+(\nabla_{4}\log\Omega) e_{3}-(\nabla_{3}\log\Omega)e_{4}\Big)\\
&=\Omega^{2}\cdot\Big((\underline{\eta}^{\sharp A}-{\eta}^{\sharp A})e_{A}-\omega e_{3}+\underline{\omega}e_{4}+(\nabla_{4}\log\Omega) e_{3}-(\nabla_{3}\log\Omega)e_{4}\Big)\\
&=\Omega^{2}\cdot \Big(\underline{\eta}^{\sharp}-\eta^{\sharp}\Big)\\
&=-2\Omega^{2}\zeta^{\sharp}.
\end{split}
\label{torsionll}
\end{equation}Hence the torsion $\zeta$ is the obstruction to the integrability of the timelike planes $\left\langle e_{3},e_{4}\right\rangle$ orthogonal to the spheres $S_{v}$.

\medskip

\textbf{6. } Let $\li_{L}$ denote the projection of the Lie derivative $\lie_{L}$ onto the spheres $S_{v}$. The \textbf{first variation formula} then reads
\begin{equation}
\li_{L}\gi=2\Omega\chi,\ \ \ \ \ \li_{L}(\gi^{-1})=-2\Omega\chi^{\sharp\sharp}.
\label{1stvarform}
\end{equation}Note that we use the induced metric $\gi$ to raise and lower indices.
 Hence, since $[L,\partial_{\theta^{i}}]=0$ on $\hh$,
\[ L\big(\gi_{ij}\big)=2\Omega\chi_{ij} \]
and hence
\begin{equation}
L\sqrt{\gi}=\Omega tr\chi\sqrt{\gi}
\label{derdet}
\end{equation}
on $\hh$. Therefore, if $\phi$ denotes the conformal factor of the sections then 
\begin{equation}
L\phi=\frac{1}{2}\Omega\cdot tr\chi\cdot \phi,
\label{lphi}
\end{equation}
since by construction we have that $L\sqrt{\gi_{\mathbb{S}^{2}}}=0$.  Furthermore, the first variational formula immediately implies
\begin{equation}
\hat{\chi}=\frac{1}{2}\phi^{2}\cdot \li_{L}\hat{\gi},
\label{conformalequation}
\end{equation}
 where $\hat{\gi}$ is the representative of the conformal geometry of $\gi$ as defined above. Hence, $\hat{\chi}$ controls the rate of change of the conformal geometry of the sections of $\hh$.

\medskip

\textbf{7. } We denote by $\li_{L},\nabb_{L}$ the projection of $\lie_{L},\nabla_{L}$ on the sections $S_{v}$ and by $\lapp,\nabb$ the induced Laplacian and gradient of $(S_{v},\gi)$, respectively.

\section{Foliation-covariance of the operator $\mathcal{A}$}
\label{sec:ChangeOfFoliationII}

Given a foliation $\mathcal{S}=(S_{v})_{v\in\mathbb{R}}=\left\langle S_{0},L_{geod}, \Omega\right\rangle$ of $\hh$, we define the operator  
\begin{equation}
\begin{split}
\mathcal{A}^{\mathcal{S}}\psi=&\lapp\Psi+\Big[ Z-2\nabb\log\phi\Big]\cdot\nabb\psi+\left[\phi\cdot\lapp\frac{1}{\phi}-\nabb\log\phi\cdot Z+\Omega^{-2}\cdot w\right]\cdot\psi,
\end{split}
\label{a}
\end{equation}
\begin{equation}
\begin{split}
Z=&2\zeta^{\sharp}+\nabb\log\Omega^{2},\\w=2\divv(\Omega^{2}\zeta)+&\Omega^{2}\cdot L_{geod}\, (\Omega tr\underline{\chi})+\frac{1}{2}(\Omega tr\chi) (\Omega tr\underline{\chi}),
\end{split}
\label{zw}
\end{equation}
on $\hh$, where $\zeta$ is the torsion, $tr\chi,\, tr\underline{\chi}$ are the null mean curvatures and $\phi$ is the conformal factor of $S_{v}$ (see \eqref{concoef} for the relevant definitions). 
Clearly, the restriction $\mathcal{A}_{v}^{\mathcal{S}}$ of $\mathcal{A}^{\s}$ on $S_{v}$ is an elliptic operator on $S_{v}$ and its definition depends only on the geometry of the foliation $\s$.  Note also that the first two terms of $w$ are at the level of the curvature since they involve derivatives of the connection coefficients. The main result of this paper is that the  operator $\mathcal{A}^{\mathcal{S}}$ is in fact covariant under change of foliation. Specifically, we show the following

\begin{theorem}Let $\hh$ be a regular null hypersurface of a four-dimensional Lorentzian manifold $(\m,g)$ and let $\vh$ be the space given by \eqref{eq:vh}. 
Let $\big(L_{geod}\big)_{1},\big(L_{geod}\big)_{2}$ be two geodesic vector fields on $\hh$ such that  
\begin{equation}
\big(L_{geod}\big)_{2}=f^{2}\cdot \big(L_{geod}\big)_{1}
\label{f}
\end{equation} for some function $f\in \vh$.  Consider two foliations $\mathcal{S}_{1},\mathcal{S}_{2}$ such that
\begin{equation}
\mathcal{S}_{1}=(S^{2}_{v})_{v\in\mathbb{R}}=\Big\langle S^{1}_{0}, \big(L_{geod}\big)_{1}, \Omega_{1}\Big\rangle
\label{s11}
\end{equation}
and
\begin{equation}
\mathcal{S}_{2}=(S^{1}_{v})_{v\in\mathbb{R}}=\Big\langle S_{0}^{2},\big(L_{geod}\big)_{2}, \Omega_{2}\Big\rangle, 
\label{s21}
\end{equation}
as defined in Section \ref{sec:TheGeometryOfNullHypersurfaces}, 
and let $\mathcal{A}^{\mathcal{S}_{1}},\mathcal{A}^{\mathcal{S}_{2}}$ be the associated elliptic operators given by \eqref{a}. Then, for all functions $\Psi\in\vh$ we have
\begin{equation}
\mathcal{A}^{\mathcal{S}_{1}}\Psi=\frac{1}{f^{2}}\cdot\mathcal{A}^{\mathcal{S}_{2}}
\left(f^{2}\cdot\Psi\right)
\label{maineqcor},
\end{equation}
on $\hh$. 
\label{maintheorem}
\end{theorem}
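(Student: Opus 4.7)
The plan is to reduce the covariance identity \eqref{maineqcor} to a list of explicit transformation rules for the geometric ingredients that enter $\mathcal{A}^{\mathcal{S}}$, and then verify the identity by direct substitution. A useful preliminary rewriting is
\[
\mathcal{A}^{\mathcal{S}}\psi \;=\; \phi\bigl[\lapp + Z\cdot\nabb\bigr]\!\bigl(\psi/\phi\bigr) + \Omega^{-2} w\cdot\psi,
\]
obtained by expanding $\phi\lapp(\psi/\phi)$ and matching it against \eqref{a}. This isolates the role of $\phi$ as a conformal weight and suggests that the claim should take the form of a conjugation: writing $\Psi=\phi_{1}\tilde\Psi_{1}=f^{-2}\phi_{2}\tilde\Psi_{2}$, \eqref{maineqcor} becomes an equality between the ``core'' operators $\lapp + Z\cdot\nabb+\Omega^{-2}w$ acting on $\tilde\Psi_{i}$ on the two sections. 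The conjugation weight $f^{2}\phi_{2}/\phi_{1}$ is consistent with this only if $f$ governs the ratio $\phi_{2}/\phi_{1}$ at a common point, which is indeed what the geometry should give.

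I would then compute, piece by piece, how each ingredient transforms. It is natural to split the change of foliation into two commuting moves: (i) the rescaling $(L_{geod})_{2}=f^{2}(L_{geod})_{1}$ with $f\in\vh$, which reparametrizes the null generators but does not change them set-theoretically; and (ii) a change of initial section and lapse $(S^{1}_{0},\Omega_{1})\mapsto(S^{2}_{0},\Omega_{2})$ with $f=1$, in which case the leaves of $\mathcal{S}_{2}$ are realized as graphs $v_{1}=\Psi(\theta)$ in the coordinates adapted to $\mathcal{S}_{1}$. Using the first variational formula \eqref{lphi}, the Jacobi-type extension of tangential vector fields along null generators recalled in Section \ref{sec:TheGeometryOfNullHypersurfaces}, and the definitions \eqref{concoef}, I would derive in turn: the ratio $\phi_{2}/\phi_{1}$ at a common point; the relation between the two induced metrics and hence between $\lapp_{i}$ and $\nabb_{i}$; and the transformations of $\zeta$, $\Omega$, $tr\chi$ and $tr\underline{\chi}$. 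These combine to give the transformation laws for $Z$ and $w$.

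The main obstacle will be the term $\Omega^{-2}\,L_{geod}(\Omega\,tr\underline{\chi})$ in $w$. On one hand $\underline{\chi}$ depends on the transversal null vector $\underline{L}_{geod}$, whose normalization is itself coupled to both $L_{geod}$ and the current section; on the other hand the derivative $L_{geod}$ brings in a transport equation for $tr\underline{\chi}$ along the null generator, so reparametrizing the generator by $f^{2}$ produces genuine first- and second-order tangential derivatives of $f$ on $\hh$. These terms have to cancel against the ones generated by $\lapp+Z\cdot\nabb$ acting on $f^{2}\Psi$ and by the change of $\phi$; isolating this cancellation is the crux of the theorem. Once all transformation rules are available, the final step is purely algebraic: substitute them into $f^{-2}\mathcal{A}^{\mathcal{S}_{2}}(f^{2}\Psi)$, use $\Psi,f\in\vh$ (so $L_{geod}\Psi=L_{geod}f=0$ and $\Psi$ is canonically identified with a function on both sections via the diffeomorphism $\Phi_{v}^{v'}$), and match the second-, first- and zeroth-order parts of the two sides. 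Treating (i) and (ii) separately and then composing them should keep the bookkeeping manageable.
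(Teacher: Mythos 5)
Your proposal is correct and follows essentially the same route as the paper: your preliminary rewriting is exactly the paper's conjugation identity \eqref{oa} expressing $\mathcal{A}^{\mathcal{S}}$ through the core operator $\mathcal{O}^{\mathcal{S}}$, your split into (i) the rescaling $(L_{geod})_2=f^2(L_{geod})_1$ and (ii) the change of section/lapse with $f=1$ is the paper's auxiliary-foliation argument, and your identification of the $\Omega^{-2}L_{geod}(\Omega\,tr\underline{\chi})$ term (handled via the transport equation for $tr\underline{\chi}$) as the crux matches where the bulk of the paper's computation lies. The only ingredient you leave implicit is the precise mechanism of the final cancellation, namely that all residual terms organize into multiples of $\mathcal{B}\psi=L\psi+\frac{1}{2}tr\chi\cdot\psi$, which vanishes on $\psi=\Psi/\phi$ by the first variation formula \eqref{lphi}.
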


\begin{proof}
We first consider the case where $f=1$, i.e.~$\big(L_{geod}\big)_{2}=\big(L_{geod}\big)_{1}=L_{geod}$. In this case, in order to simplify the notation, we use the primed and unprimed notation for the two foliations. Specifically, we consider two foliations $\mathcal{S},\mathcal{S}'$ such that
\begin{equation}
\mathcal{S}=(S_{v})_{v\in\mathbb{R}}=\left\langle S_{0}, \left.L_{geod}\right|_{S_{0}}, \Omega_{\mathcal{S}}=1\right\rangle
\label{s1}
\end{equation}
and
\begin{equation}
\mathcal{S}'=(S'_{v'})_{v'\in\mathbb{R}}=\left\langle S_{0}', \left.L_{geod}\right|_{S_{0}'}, \Omega_{\mathcal{S}'}=\Omega\right\rangle
\label{s2}
\end{equation}
and let $\mathcal{A}^{\s},\mathcal{A}^{\s'}$ be the associated elliptic operators given by \eqref{a}, \eqref{o}, respectively. All the quantities with respect to $\mathcal{S}'$, except from $\Omega$, will be primed and all quantites with respect to $\mathcal{S}$ will be unprimed.

Next, given a foliation $\mathcal{S}=\left\langle S_{0},L_{geod},\Omega\right\rangle$, we introduce the auxiliary operator
\begin{equation}
\o^{\mathcal{S}}\psi=\Omega^{2}\cdot\lapp\psi+\Omega^{2}\cdot Z \cdot\nabb\Psi+w\cdot\psi,
\label{o}
\end{equation}
where $Z,w$ are given by \eqref{zw} and $\psi$ is a smooth function on $\hh$. It is a straightforward calculation to confirm that 
\begin{equation}
\mathcal{A}^{\mathcal{S}}\psi=\Omega^{-2}\cdot{\phi}\cdot\o^{\mathcal{S}}\left(\frac{1}{\phi}\cdot \psi\right),
\label{oa}
\end{equation}
on $\hh$, for all $\psi\in C^{\infty}(\hh)$. Since $f=1$, we need to show that for all $\Psi\in\vh$ we have
\begin{equation}
\mathcal{A}^{\s'}\Psi=\mathcal{A}^{\s}\Psi 
\label{fora}
\end{equation}
on $\hh$. Equivalently, in view of \eqref{oa} and since  $\phi'=\phi$ pointwise, it suffices to show that for $\Psi\in\vh$ we have
\begin{equation}
\o^{\s'}\left(\frac{1}{\phi'}\cdot \Psi\right)=\Omega^{2}\cdot\o^{\s}\left(\frac{1}{\phi}\cdot \Psi\right),
\label{toshow}
\end{equation}on $\hh$.  Note that the  operators $\o^{\s},\o^{\s'}$ are given explicitly by:
\[\o^{\s}\psi=\lapp\psi+2\zeta^{\sharp}\cdot\nabb\psi+\left[2\divv\,\zeta^{\sharp}+\partial_{v}( tr\underline{\chi})+\frac{1}{2}( tr\underline{\chi} )( tr\chi)\right]\cdot\psi,\]
and
\[\o^{\s'}\psi=\Omega^{2}\cdot\lapp'\psi+\left[\nabb'\Omega^{2}+2\Omega^{2}\cdot(\zeta')^{\sharp}\right]\cdot\nabb'\psi+\left[2\divv'\,\Big(\Omega^{2}\cdot(\zeta')^{\sharp}\Big)+\partial_{v'}(\Omega tr'\underline{\chi}')+\frac{1}{2}(\Omega tr'\underline{\chi}' )(\Omega tr'\chi')\right]\cdot\psi,\]
for $\phi\in C^{\infty}(\hh)$. By assumption we have $L_{geod}'=L_{geod}$ and so
\[e_{4}=L=\partial_{v}=L_{geod}\]
and
\[  e_{4}'=\Omega e_{4} , \ \  L'=\partial_{v'}=\Omega^{2}L, \]  on $\hh$.
Hence, we immediately find that 
\begin{equation}
\chi'=\Omega\cdot \chi 
\label{chichi}
\end{equation}
on $\hh$. 

Let now $p\in S_{v}\cap S'_{v'}$ and $X\in T_{p}S_{v}$. Clearly $dim\Big(\left\langle X,L\right\rangle\cap T_{p}S'_{v'}\Big)=1$ and hence   there is a point where the line $X+\left\langle L\right\rangle\subset T_{p}\hh$ intersects $T_{p}S'_{v'}$, that is for all $X\in T_{p}S_{v}$ there is a unique $\gamma(X)$ such that 
\begin{equation}
X'=X+\gamma(X)\cdot L\in T_{p}S'_{v'}.
\label{gamm2}
\end{equation}  
Clearly, the correspondence 
\[\Pi:(T_{p}{S_{v}},\gi)\ni X\mapsto X' \in (T_{p}S'_{v'},\gi')  \]
is a linear isomorphism (in fact, an isometry) and so $\gamma$ is an $S$ 1-form. Note that $\Pi(X)$ is simply the projection of $X$ onto $T_{p}S'_{v'}$ and $\gamma(X)$ is the projection of $X'$ onto $\left\langle L\right\rangle$ relative to $S_{v}$.

If $\big(\partial_{\theta^{i}}\big)\in TS_{v},\big(\partial_{\theta^{i}}\big)'\in TS'_{v'},i=1,2,$ denote the coordinate vector fields with respect to $(v,\theta^{1},\theta^{2})$ and $(v',\theta^{1},\theta^{2})$, respectively, then one easily obtains that 
\begin{equation}
\big(\partial_{\theta^{i}}\big)'=\Pi\big(\partial_{\theta^{i}}\big). 
\label{forcoordinatecovariance}
\end{equation}

 We next compute $\gamma$.  Recall that 
\[v=0  \text{ at }S_{0}, \ \ \ v'=0 \text{ at }S'_{0}. \]
Furthermore, 
\[ \partial_{v'}v'=\Omega^{2}\pv v'\Rightarrow \pv v'=\Omega^{-2} \ \text{ on }\hh \]
and if
\[v'(v=0,\theta)=i(\theta)\] then
\[  v'(v,\theta)=f(v,\theta) \, +i(\theta)\ \text{ on }\hh,  \]
where \[f(v,\theta)= \int_{0}^{v}\Omega^{-2}(\overline{v},\theta)\, d\overline{v}.\]
Using $v'$ as a test function in \eqref{gamm2} we immediately obtain
\[\gamma(X)=-\Omega^{2}\cdot Xv'=-\Omega^{2}\cdot X\cdot\nabb (f+i)\text{ on }\hh\]
and so
\begin{equation}
\gamma=-\Omega^{2}\cdot \di\left(f+i\right)\text{ on }\hh.
\label{gamma2}
\end{equation}
We next compute the projection of $\underline{L}'$ 
on $T_{p}S_{v}$ (recall that $\underline{L}'$ is normal to $S'_{v'}$). Suppose  that
\[\underline{L}'=c_{L}\cdot L+c_{\underline{L}}\cdot \underline{L}+P \ \text{ on }\hh,\]
where $P\in T_{p}S_{v}$ is the projection of $\underline{L}'$  on $T_{p}S_{v}$.  
 \begin{figure}[H]
   \centering
		\includegraphics[scale=0.127]{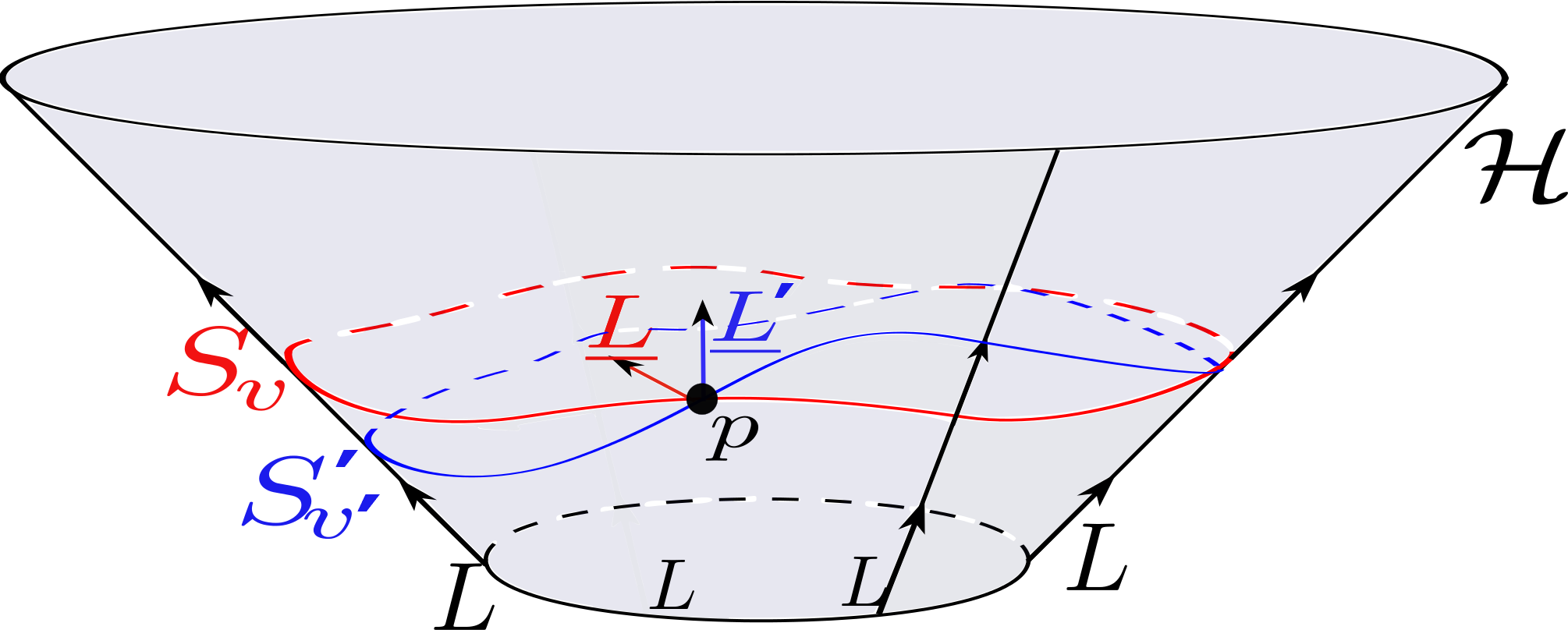}
	\label{fig:nullsection1}
\end{figure}

Since $g(\underline{L}',L)=-1$ we obtain \[c_{\underline{L}}=1 \text{ on }\hh.\] 
Using that $g(\underline{L}',X')=0$ for all $X'\in T_{p}S'_{v'}$ and the relation $X=X'-\gamma(X)\cdot L$ we obtain
\[g(P,X)=\gamma(X) \text{ on }\hh,   \]
which implies that   
\[P_{\flat}=\gamma\]or, equivalently, 
\[P=\gamma^{\sharp}=-\Omega^{2}\cdot\nabb\left(\int_{0}^{v}\Omega^{-2}\right)\text{ on }\hh.  \]
Also, since $g(\underline{L}',\underline{L}')=0$ we obtain 
\[c_{L}=\frac{1}{2}g(P,P)=\frac{1}{2}(\gamma,\gamma) \text{ on }\hh. \]
Therefore,
\begin{equation}
\underline{L}'=\frac{1}{2}(\ga,\ga)\cdot L+\underline{L}+\ga^{\sharp}.
\label{lbar}
\end{equation}
The torsion $\zeta'$ satisfies
\begin{equation*}
\begin{split}
\zeta'(X')=&g(\nabla_{X'}(e_{4})',(e_{3})')=g\Big(\nabla_{X'}\Omega^{-1}\cdot L', \Omega^{-1}\cdot \underline{L}'\Big)\\=&-\Omega\cdot \nabla_{X'}\frac{1}{\Omega}+\frac{1}{\Omega^{2}}g\Big(\nabla_{X'}L',\underline{L}'\Big)
\\=&-\Omega\cdot \nabla_{X'}\frac{1}{\Omega}+\frac{1}{\Omega^{2}}\cdot g\Big(\nabla_{X'}(\Omega^{2}L),\frac{1}{2}\big(\gamma,\gamma\big)\cdot L+\underline{L}+\ga^{\sharp}\Big)
\\=&-\Omega\cdot \nabla_{X'}\frac{1}{\Omega}-\frac{1}{\Omega^{2}}\nabla_{X'}\Omega^{2}+g\Big(\nabla_{X'}L,\frac{1}{2}\big(\gamma,\gamma\big)\cdot L+\underline{L}+\ga^{\sharp}\Big)
\\=&-\Omega\cdot \nabla_{X'}\frac{1}{\Omega}-\frac{1}{\Omega^{2}}\nabla_{X'}\Omega^{2}+g\Big(\nabla_{X'}L,\underline{L}\Big)+g\Big(\nabla_{X'}L,\ga^{\sharp}\Big)\\
=&-(\di'\log\Omega)(X')+\zeta(X')+(\chi\cdot P)(X'),
\end{split}
\end{equation*}
on $\hh$. Note also that $\zeta(X')=\zeta(X)$, $(\chi\cdot \ga^{\sharp})(X')=(\chi\cdot \ga^{\sharp})(X)$ and $\nabb'_{X'}\log\Omega = \nabb_{\gamma(X)\cdot L}\log\Omega+\nabb_X{\log\Omega}$. Therefore, 
\begin{equation}
\zeta'(X')=-\gamma(X)\cdot L\log\Omega-\nabb_{X}\log\Omega+\zeta(X)+(\chi\cdot \gamma^{\sharp})(X).
\label{zetazetatonos}
\end{equation}
We next compute the incoming null second fundamental form
\begin{equation*}\begin{split}
\underline{\chi}'(X',Y')=&g(\nabla_{X'}(e_{3})',Y' )=g(\nabla_{X'}\Omega^{-1}\underline{L'}, Y')=\Omega^{-1}\cdot g(\nabla_{X'}\underline{L}', Y')\\
=& \Omega^{-1}\cdot g\left(\nabla_{X'}\left(\frac{1}{2}\big(\gamma,\gamma\big)\cdot L+\underline{L}+P\right),Y'\right)\\
=&\Omega^{-1}\cdot \left[\frac{1}{2}\big(\gamma,\gamma\big)\cdot g(\nabla_{X'}L,Y')+g(\nabla_{X'}\underline{L},Y')+g(\nabla_{X'}P,Y') \right]. 
\end{split}\end{equation*}
Note that $g(\nabla_{X'}L,Y')=g(\nabla_{X}L,Y)=\chi(X,Y)$ and also the following
\begin{equation*}\begin{split}
g(\nabla_{X'}\underline{L},Y')=& g\Big(\nabla_{\gamma(X){L}+X}\underline{L},\gamma(Y){L}+Y\Big)\\ =&
\gamma(X)\cdot g(\nabla_{L}\underline{L},Y)+\gamma(Y)\cdot g(\nabla_{X}\underline{L},L)+g(\nabla_{X}\underline{L}, Y)\\
=&-\gamma(X)\cdot \zeta(Y)-\gamma(Y)\cdot \zeta(X)+\underline{\chi}(X,Y) 
\end{split}\end{equation*}
and
\begin{equation*}\begin{split}
g(\nabla_{X'}P, Y')=&g\Big(\nabla_{\gamma(X)L+X}P,\gamma(Y)L+Y\Big)
\\=&\gamma(X)\cdot g(\nabla_{L}P,Y)-\gamma(Y)\cdot \chi(P,X)+g(\nabla_{X}P,Y)
\\=&\gamma(X)\cdot (\nabla_{L}\gamma)(Y)-\gamma(Y)\cdot \chi(\gamma^{\sharp},X)+(\nabla_{X}\gamma)(Y),
\end{split}\end{equation*}
and hence
\[\Omega\underline{\chi}'(X',Y')=\a\cdot \chi(X,Y)-\gamma(X)\cdot \zeta(Y)-\gamma(Y)\cdot \zeta(X)+\underline{\chi}(X,Y)+\gamma(X)\cdot (\nabb_{L}\gamma)(Y)-\gamma(Y)\cdot( \chi\cdot\gamma^{\sharp})(X)+(\nabb_{X}\gamma)(Y). 
   \]
   and so
   \[\Omega\underline{\chi}'=\frac{1}{2}(\ga,\ga)\cdot\chi-\ga\otimes\zeta-\zeta\otimes\ga+\underline{\chi}+\ga\otimes\nabb_{L}\ga-\ga\otimes \chi\cdot \ga^{\sharp}+\nabb\ga.\] 
Therefore, since the metrics $\gi,\gi'$ are pointwise the same, i.e~at any point we have $g(X,Y)=g(X',Y')$, we obtain
\begin{equation}\begin{split}
\Omega tr'\underline{\chi}'=& \frac{1}{2}\big(\gamma,\gamma\big)\cdot tr\chi-2(\gamma,\zeta)+tr\underline{\chi}+\Big(\nabb_{L}\gamma,\gamma\Big)-\Big(\chi\cdot P, \gamma\Big)+\divv \gamma\\
=& \frac{1}{2}\big(\gamma,\gamma\big)\cdot tr\chi+tr \underline{\chi}+\divv \gamma+\Big(\nabb_{L}\gamma-\chi\cdot \gamma^{\sharp}-2\zeta,\gamma\Big)\\
=&tr\underline{\chi}+\divv\ga-\hat\chi(\ga^{\sharp},\ga^{\sharp})+\Big(\nabb_{L}\ga-2\zeta,\ga\Big)
 \end{split}\label{tra}\end{equation}
We next express the $\mathcal{S}'$-induced operators $\nabb', \ \divv'$ in terms of the $\mathcal{S}$ foliation. Let $p\in S'_{v'}\cap S_{v}$ and, as before,  let $\psi$ be a smooth function on $\hh$. We have $Y'\psi=Y\psi+\gamma(Y)\cdot L\psi$ for all $Y\in TS_{v}$. Assume that
\[\nabb'\psi=X+\gamma(X)\cdot L, \]for some $X\in TS_{v}$.  
Then for all $Y\in TS_{v}$ we have
\[g(X,Y)=g(Y,\nabb'\psi)=g(Y',\nabb'\psi)=Y'\psi=Y\psi+\gamma(Y)\cdot L\psi=g(\nabb \psi+\gamma^{\sharp}\cdot L\psi, Y)\]
and since this is true for all $Y\in  TS_{v}$ and since by definition $X\in TS_{v}$ we have $X=\nabb \psi+\gamma^{\sharp}\cdot L\psi$ and so
\begin{equation}
{\nabb'\psi=\left[\nabb \psi+\gamma^{\sharp}\cdot L\psi\right]+\left[\gamma^{\sharp}\cdot \nabb \psi+L\psi\cdot(\gamma,\ga)\right]\cdot L.
}
\label{nabla}\end{equation}
We next compute $\divv'\xi'$ for a $S'$ vector field $\xi'$. Let $E_{i},i=1,2$ be a local orthonormal $S$ frame. We have
\begin{equation}
\begin{split}
\divv' \xi'=&\sum_{i}g(\nabb'_{E_{i}'}\xi',E_{i}')=\sum_{i}g(\nabla_{E_{i}'}\xi',E_{i}')=\sum_{i}g\left(\nabla_{(E_{i}+\ga_{i}L)}\xi+\ga(\xi)L,E_{i}+\ga_{i}L\right)\\
=&\sum_{i}g\left(\nabla_{E_{i}}\xi,E_{i} \right)+g\left(\nabla_{E_{i}}\xi,\ga_{i} L\right)+\ga_{i}g\left(\nabla_{L}\xi,E_{i}\right)+g\left(\nabla_{E_{i}}\ga(\xi)L,E_{i}\right)\\
=&\sum_{i}g\left(\nabla_{E_{i}}\xi,E_{i} \right)+\ga_{i}g\left(\nabla_{E_{i}}\xi, L\right)+\ga_{i}g\left(\nabla_{L}\xi,E_{i}\right)+\ga(\xi)g\left(\nabla_{E_{i}}L,E_{i}\right)\\
=&\sum_{i}g\left(\nabla_{E_{i}}\xi,E_{i} \right)-\ga_{i}g\left(\xi, \nabla_{E_{i}}L\right)+\ga_{i}g\left(\nabla_{L}\xi,E_{i}\right)+\ga(\xi)g\left(\nabla_{E_{i}}L,E_{i}\right)\\
=&\divv\xi +\sum_{i}-\ga_{i}\chi(E_{i},\xi)+\ga_{i}g\left(\nabla_{L}\xi,E_{i}\right)+\ga(\xi)\chi(E_{i},E_{i})\\
=&\divv\xi-\chi(\xi,\ga^{\sharp})+\ga(\xi)\cdot tr\chi+g(\nabb_{L}\xi,\ga^{\sharp})
\\
=&\divv\xi-\hat{\chi}(\xi,\ga^{\sharp})+\frac{1}{2}\ga(\xi)\cdot tr\chi+g(\nabb_{L}\xi,\ga^{\sharp})\\
=&\divv\xi+\ga(\xi)\cdot tr\chi+\ga\Big(\li_{L}\xi\Big),
\end{split}
\label{divergence}
\end{equation} 
where we used that $g([\xi,E_{i}],L)=0$. We also have
\begin{equation}
\begin{split}
\li_{L}(\nabb \psi )=&\li_{L}\Big(\gi^{\mu\nu}\cdot \partial_{\mu}\psi \cdot \partial_{\nu} \Big)=\Big(\li_{L}\gi^{\mu\nu}\Big)\cdot \partial_{\mu}\psi \cdot \partial_{\nu} +\gi^{\mu\nu}\cdot \Big(\li_{L} \partial_{\mu}\psi \Big)\cdot \partial_{\nu} +\gi^{\mu\nu}\cdot \li_{L} \partial_{\mu}\psi \cdot[L,\partial_{\nu}] \\
=& -2\chi^{\mu\nu}\cdot\partial_{\mu}\psi \cdot\partial_{\nu}+\gi^{\mu\nu}\cdot \partial_{\mu}(L\psi )\cdot\partial_{\nu}\\
=&-2\chi^{\sharp}\cdot\di \psi +\nabb(L\psi ). 
\end{split}
\label{lieLnabla}
\end{equation} 
Note that we used that $\li_{L}\gi^{\mu\nu}=-2\chi^{\mu\nu}$ and that  the foliation $\mathcal{S}$  is geodesic (and hence $\Omega=1$ in this case).
Regarding the $\mathcal{S}'$-Laplacian  $\lapp'\psi $ we have
\begin{equation}
\begin{split}
\lapp'\psi =&\divv'\nabb'\psi = \divv(\nabb \psi +\ga^{\sharp}\cdot L\psi )+\ga(\nabb \psi +\ga^{\sharp}\cdot L\psi )\cdot tr\chi+\ga\Big(\li_{L}(\nabb \psi +\ga^{\sharp}\cdot L\psi )\Big)\\
=&\lapp \psi +2\big(\ga,\di(L\psi )\big)+L\psi \cdot\divv\gamma+tr\chi\cdot(\ga,\di \psi )+tr\chi\cdot L\psi \cdot (\ga,\ga)-2\chi(\ga^{\sharp},\nabb \psi )\\&+LL\psi \cdot(\ga,\ga)+L\psi \cdot\gi\big(\li_{L}\gamma^{\sharp},\ga^{\sharp}\big).
\end{split}
\label{laplacians}
\end{equation} 
Note that, since $tr\chi\cdot(\ga,\di\psi)-2\chi(\ga^{\sharp},\nabb\psi)=2\hat{\chi}(\ga^{\sharp},\nabb\psi)$, if we let $L\psi=0$ then we obtain that we have $\lapp'\psi=\lapp\psi$ if and only if $\hat{\chi}=0$ which, by virtue of \eqref{conformalequation}, is equivalent to the fact that the induced metrics $\gi,\gi'$ are conformal. 

We next compute $A'\psi=\left[\nabb'\Omega^{2}+2\Omega^{2}\cdot(\zeta')^{\sharp}\right]\cdot\nabb'\psi$. Assuming that $(\zeta')^{\sharp}=X+\ga(X)\cdot L$ and $Y\in TS_{v}$ we have
\begin{equation*}
\begin{split}
g(X,Y)=g((\zeta')^{\sharp},Y')=\zeta'(Y')=g\Big(-(L\log\Omega)\cdot\ga^{\sharp}-\nabb\log\Omega+\zeta^{\sharp}+\chi^{\sharp}\cdot\gamma^{\sharp},Y\Big)
\end{split}
\end{equation*}
and so
\begin{equation}
X=-(L\log\Omega)\cdot\ga^{\sharp}-\nabb\log\Omega+\zeta^{\sharp}+\chi^{\sharp}\cdot\gamma^{\sharp}.
\label{zeta}
\end{equation}
Therefore, recalling the expressions for $\nabb'\Omega^{2},\, \nabb'\psi$ we obtain
\begin{equation*}
\begin{split}
A'\psi=&\Big[\nabb\Omega^{2}+(L\Omega^{2})\cdot\ga^{\sharp}+2\Omega^{2}\cdot\left[-(L\log\Omega)\cdot\ga^{\sharp}-\nabb\log\Omega+\zeta^{\sharp}+\chi^{\sharp}\cdot\gamma^{\sharp}\right] \Big]\cdot\big[\nabb\psi+(L\psi)\cdot \ga^{\sharp}\big]\\
=&2\Omega^{2}\cdot\left[\zeta^{\sharp}+\chi^{\sharp}\cdot\gamma^{\sharp}\right]\cdot\big[\nabb\psi+(L\psi)\cdot \ga^{\sharp}\big]\\
=&2\Omega^{2}\cdot\left[\zeta^{\sharp}\cdot\nabb\psi+(L\psi)\cdot(\ga,\zeta)+\chi(\ga^{\sharp},\nabb\psi)+(L\psi)\cdot \chi(\ga^{\sharp},\ga^{\sharp})\right]. 
\end{split}
\end{equation*}
We thus have
\begin{equation}
\lapp'\psi+\left[\nabb'\Omega^{2}+2\Omega^{2}\cdot(\zeta')^{\sharp}\right]\cdot\nabb'\psi=\lapp\psi+2\zeta^{\sharp}\cdot\nabb\psi+\mathcal{E}\psi,
\label{higherorder}
\end{equation}
where
\begin{equation}
\begin{split}
\mathcal{E}\psi=&2\big(\ga,\di(L\psi )\big)+L\psi \cdot\divv\gamma+tr\chi\cdot(\ga,\di \psi )+tr\chi\cdot L\psi \cdot (\ga,\ga)\\&+LL\psi \cdot(\ga,\ga)+L\psi \cdot\gi\big(\li_{L}\gamma^{\sharp},\ga^{\sharp}\big)+2(L\psi)\cdot(\ga,\zeta)+2(L\psi)\cdot \chi(\ga^{\sharp},\ga^{\sharp}).
\end{split}
\label{epsilon}
\end{equation}

We next look at the coefficient of the zeroth order term in the expression for $\o'_{v'}\psi$: 
\begin{equation}
w'=2\divv'\Big(\Omega^{2}\cdot \zeta'\Big)+\partial_{v'}\big(\Omega tr'\underline{\chi}'\big)+\frac{1}{2}(\Omega tr'\chi')(\Omega tr'\underline{\chi}').
\label{sigmatonos}
\end{equation}
In view of \eqref{chichi} we have
\begin{equation}
\Omega tr'\chi'=\Omega^{2}tr\chi.
\label{trchi}
\end{equation}
and \eqref{tra} can also be written as
\begin{equation}
\begin{split}
\Omega tr'\underline{\chi}'=& tr\underline{\chi}+\divv\ga +\gi(\nabb_{L}\ga^{\sharp},\ga^{\sharp})-2(\zeta,\ga)-\hat{\chi}(\ga^{\sharp},\ga^{\sharp})\\
=& tr\underline{\chi}+\divv\ga +\ga\big(\li_{L}\ga^{\sharp}\big)+\frac{1}{2}tr\chi\cdot(\ga,\ga)-2(\zeta,\ga).\\
\end{split}
\label{trachibar}
\end{equation}
We next compute $\divv'\big(2\Omega^{2}\zeta'\big)$. In view of \eqref{zeta} we obtain that if 
\[ 2\Omega^{2}\zeta'= X_{\Omega}+\ga(X_{\Omega})\cdot L, \]
then
\[X_{\Omega}=-(L\Omega^{2})\cdot \gamma^{\sharp}-\nabb\Omega+2\Omega^{2}\cdot\zeta^{\sharp}+2\Omega^{2}\chi^{\sharp}\cdot\ga^{\sharp}. \]
Therefore,
\begin{equation*}
\begin{split}
\divv'\big(2\Omega^{2}\zeta'\big)=&\divv X+\ga(X)\cdot tr\chi+\ga([L,X])\\
=&-\di L\Omega^{2}\cdot \ga^{\sharp}-L\Omega^{2}\cdot \divv\ga-\lapp\Omega^{2}+2(\di\Omega^{2},\zeta)+2\Omega^{2}\cdot\divv\zeta+2\divv\Big(\Omega^{2}\cdot\chi^{\sharp}\cdot\ga^{\sharp}\Big)\\
&+\left[-(L\Omega^{2})\cdot(\ga,\ga)-(\ga^{\sharp},\nabb\Omega^{2})+2\Omega^{2}\cdot(\zeta,\gamma)+2\Omega^{2}\cdot\chi(\gamma^{\sharp},\ga^{\sharp})\right]\cdot tr\chi\\&+\ga\Big(\li_{L}X\Big).
\end{split}
\end{equation*}
Note the following
\[\divv\big(\chi(\ga^{\sharp},\cdot)\big)=(\divv\chi)(\ga^{\sharp})+(\chi,\nabb\ga^{\sharp}), \]
\[\li_{L}X=-(LL\Omega^{2})\cdot\ga^{\sharp}-(L\Omega^{2})\cdot \li_{L}\ga^{\sharp}-\li_{L}\nabb\Omega^{2}+2(L\Omega^{2})\cdot\zeta^{\sharp}+2\Omega^{2}\cdot\li_{L}\zeta^{\sharp}+2(L\Omega^{2})\cdot\chi^{\sharp}\cdot\ga^{\sharp}+2\Omega^{2}\cdot\li_{L}\Big(\chi\big(\ga^{\sharp},\cdot\big)^{\sharp}\Big) \]
\[ \li_{L}\big(\nabb \Omega^{2} \big)=\nabb L\Omega^{2}-2\chi^{\sharp}\cdot \di\Omega^{2},\]
\[\ga\Big(\li_{L}\big(\nabb \Omega^{2} \big)\Big)=(\ga,\di L\Omega^{2})-2\chi(\ga^{\sharp},\nabb\Omega^{2}),\]
\[\li_{L}\Big(\chi\big(\ga^{\sharp},\cdot\big)^{\sharp}\Big)=-2\chi^{\sharp}\cdot (\chi\cdot\ga^{\sharp})+\big((\li_{L}\chi)\cdot\ga^{\sharp}\big)^{\sharp}+\chi^{\sharp}\cdot\li_{L}\ga^{\sharp},  \]
\[\ga\bigg(\li_{L}\Big(\chi\big(\ga^{\sharp},\cdot\big)^{\sharp}\Big)\bigg)=-2(\chi\times\chi)(\ga^{\sharp},\ga^{\sharp})+\big(\li_{L}\chi\big)(\ga^{\sharp},\ga^{\sharp})+\chi\big(\li_{L}\ga^{\sharp},\ga^{\sharp}\big). \]
Therefore, we obtain
\begin{equation}
\begin{split}
\divv'\big(2\Omega^{2}\zeta'\big)=&-\di L\Omega^{2}\cdot \ga^{\sharp}-L\Omega^{2}\cdot \divv\ga-\lapp\Omega^{2}+2(\di\Omega^{2},\zeta)+2\Omega^{2}\cdot\divv\zeta+
2\chi(\ga^{\sharp},\nabb\Omega^{2})\\&+2\Omega^{2}\cdot (\divv\chi)(\ga^{\sharp})+2\Omega^{2}\cdot(\chi,\nabb\ga^{\sharp})\\
&+\left[-(L\Omega^{2})\cdot(\ga,\ga)-(\ga^{\sharp},\nabb\Omega^{2})+2\Omega^{2}\cdot(\zeta,\gamma)+2\Omega^{2}\cdot\chi(\gamma^{\sharp},\ga^{\sharp})\right]\cdot tr\chi\\&-(LL\Omega^{2})\cdot(\ga,\ga)-(L\Omega^{2})\cdot \gi\big(\li_{L}\ga^{\sharp},\ga^{\sharp}\big)- (\ga,\di L\Omega^{2})+2\chi(\gamma^{\sharp},\nabb\Omega^{2})+2(L\Omega^{2})\cdot(\zeta,\ga)\\&+2\Omega^{2}\cdot (\li_{L}\zeta^{\sharp},\ga^{\sharp})
+2(L\Omega^{2})\cdot\chi(\ga^{\sharp},\ga^{\sharp})-4\Omega^{2}\cdot(\chi\times\chi)(\ga^{\sharp},\ga^{\sharp})+2\Omega^{2}\cdot\big(\li_{L}\chi\big)(\ga^{\sharp},\ga^{\sharp})\\&+2\Omega^{2}\cdot\chi\big(\li_{L}\ga^{\sharp},\ga^{\sharp}\big).
\end{split}
\label{divergencezeta}
\end{equation}

We next compute $\partial_{v'}(\Omega tr'\underline{\chi}')$. Instead of using equation \eqref{tra} we will use the following geometric equation
\begin{equation}
\nabb_{L'}(\Omega\underline{\chi}')=\Omega \nabb_{4'}(\Omega\underline{\chi}')=\Omega^{2}\nabb'\underline{\eta}'+\Omega^{2}\underline{\eta }'\otimes\underline{\eta}'-\Omega\chi'\times\Omega\underline{\chi}'-R'(\cdot,L',\cdot, \underline{L}') 
\label{miji}
\end{equation}which follows from the relations
\begin{equation*}
\begin{split}
&g(\nabla_{X'}e_{3}',\nabla_{4}Y')=\Big[(\underline{\chi}'\times\chi')+\frac{1}{2}(\underline{\eta}'\otimes\underline{\eta}')-\frac{1}{2}(\eta'\otimes \underline{\eta}')\Big](X',Y'),\\
&g(\nabla_{X'}\nabla_{4}e_{3}',Y')=(\nabla_{X'}\underline{\eta}')(Y')-\omega' \underline{\chi}'(X',Y'),\\
&g\left(\nabla_{[\nabla_{4},X']}e_{3}',Y'\right)=\frac{1}{2}(\eta'\otimes\underline{\eta}')(X',Y')+\frac{1}{2}(\underline{\eta}'\otimes\underline{\eta}')(X',Y'),
\\&g(\nabla_{4}\nabla_{X'}e_{3}',Y')=g(\nabla_{X'}\nabla_{4}e_{3}'+\nabla_{[\nabla_{4},X']}e_{3}',Y')+R'(Y',e_{3}',e_{4}',X'),
\end{split}
\end{equation*}
and
\begin{equation*}
\begin{split}
\left(\nabb_{4}\underline{\chi}'\right)(X',Y')&=\left(\nabla_{4}\underline{\chi}'\right)(X',Y')=\nabla_{4}\left(\underline{\chi}'(X',Y')\right)-\underline{\chi}'(\nabla_{4}X',Y')-\chi(X',\nabla_{4}Y')\\
&=\nabla_{4}\big(g(\nabla_{X'}e_{3}',Y')\big)-\underline{\chi}'(\chi^{'\sharp}\cdot X', Y')-\underline{\chi}'(X', \chi^{'\sharp}\cdot Y')\\
&=g(\nabla_{4}\nabla_{X'}e_{3}',Y')+g(\nabla_{X'}e_{3}',\nabla_{4}Y')-(\underline{\chi}'\times\chi')(Y',X')-(\underline{\chi}'\times \chi')(X',Y')\\
&=(\nabla_{X'}\underline{\eta}')(Y')+(\underline{\eta}'\otimes\underline{\eta}')(X',Y')-(\chi'\times\underline{\chi}')(X',Y')-R'(X',e_{4}',Y',e_{3}')-\omega'\underline{\chi}'(X',Y).
\end{split}
\end{equation*}
Recall now that (see Section \ref{sec:TheGeometryOfNullHypersurfaces}) \[\underline{\eta}'=-\zeta'+\di'\log\Omega.\]
Therefore,
\begin{equation}
\Omega^{2}\underline{\eta }'\otimes\underline{\eta}'=\Omega^{2}\zeta'\otimes\zeta'-\frac{1}{2}\zeta'\otimes\di'\Omega^{2}-\frac{1}{2}\di'\Omega^{2}\otimes\zeta'+\di'\Omega\otimes\di'\Omega
\label{1}
\end{equation}
and 
\begin{equation}
\Omega^{2}\nabb'\underline{\eta}'=-\Omega^{2}\nabb'\zeta'+\Omega^{2}\nabb'^{2}\log\Omega=-\nabb'\big(\Omega^{2}\cdot\zeta'\big)+\di'\Omega^{2}\otimes \zeta'+\Omega^{2}\cdot\nabb'^{2}\log\Omega.
\label{2}
\end{equation}
Therefore, by taking the trace of \eqref{miji} with respect to the metric $\gi'$ and using \eqref{1}, \eqref{2} we obtain
\begin{equation}
\begin{split}
\partial_{v'}\big(\Omega tr'\underline{\chi}'\big)=&-\divv'\big(\Omega^{2}\cdot\zeta'\big)+\Omega\lapp'\Omega+(\Omega\zeta',\Omega\zeta')-(\Omega\chi',\Omega\underline{\chi}')-tr'_{X',Y'}R'(X',L',Y',\underline{L}')
\end{split}
\label{mijitracetonos}
\end{equation}
since
\[\Omega^{2}\lapp'\log\Omega=\Omega\lapp'\Omega-(\nabb'\Omega^{2},\nabb'\Omega^{2}).\]
Moreover, note that for the $S$ foliation we have
\begin{equation*}
\begin{split}
\partial_{v}\big( tr\underline{\chi}\big)=-\divv\zeta+(\zeta,\zeta)-(\chi,\underline{\chi})-tr_{X,Y}R(X,L,Y,\underline{L}).
\end{split}
\end{equation*}

The equations \eqref{sigmatonos} and \eqref{mijitracetonos} yield
\begin{equation}
w'=\divv'(\Omega^{2}\cdot\zeta')+(\Omega\zeta',\Omega\zeta')-(\Omega\chi',\Omega\underline{\chi}')+\frac{1}{2}(\Omega tr'\chi')(\Omega tr'\underline{\chi}')-tr'_{X',Y'}R'(X',L',Y',\underline{L}') +\Omega\cdot\lapp'\Omega
\label{newsigmatonos}
\end{equation}
and similarly we obtain
\begin{equation}
w=\divv\zeta +  {(\zeta,\zeta)}-(\chi,\underline{\chi})+\frac{1}{2}tr\chi\cdot tr\underline{\chi}-tr_{X,Y}R(X,L,Y,\underline{L}).
\label{newsigma}
\end{equation}

We now express $(\Omega\zeta',\Omega\zeta')$ and the remaining quantities in \eqref{newsigmatonos} in terms of the $\mathcal{S}$ foliation. By virtue of \eqref{zeta} we obtain
\begin{equation}
\begin{split}
(\Omega\zeta',\Omega\zeta')=&(L\Omega)^{2}\cdot (\ga,\ga)+2\cdot L\Omega\cdot(\ga,\di\Omega)-L\Omega^{2}\cdot (\zeta,\ga)-L\Omega^{2}\cdot \chi(\ga^{\sharp},\ga^{\sharp})\\&+(\nabb\Omega,\nabb\Omega)-2\Omega\cdot(\zeta,\di\Omega)-2\Omega\cdot \chi(\ga^{\sharp},\nabb\Omega)+\Omega^{2}\cdot(\zeta,\zeta)\\&+2\Omega^{2}\cdot\chi(\ga^{\sharp},\zeta)+\Omega^{2}\cdot(\chi\times\chi)(\ga^{\sharp},\ga^{\sharp}).
\end{split}
\label{3}
\end{equation}
Furthermore,
\begin{equation}
\begin{split}
(\Omega\chi',\Omega\underline{\chi}')=&\frac{1}{2}\Omega^{2}\cdot(\ga,\ga)\cdot(\chi,\chi)-2\Omega^{2}\cdot\chi(\ga^{\sharp},\zeta^{\sharp})+\Omega^{2}\cdot(\chi,\underline{\chi})+\Omega^{2}\cdot\chi\big(\nabb_{L}\ga^{\sharp},\ga^{\sharp}\big)\\&-\Omega^{2}\cdot(\chi\times\chi)(\ga^{\sharp},\ga^{\sharp})+\Omega^{2}\cdot(\chi,\nabb\ga).
\end{split}
\label{4}
\end{equation}
This can be further decomposed in trace and trace-free parts. Regarding the curvature components $R'$ and $R$, decomposed with respect to null frames associated to the foliations $\mathcal{S}$ and $\mathcal{S}'$, respectively, we have
\begin{equation*}
\begin{split}
R'(X',L',Y',\underline{L}')=& R\Big(X+\ga(X)L,\Omega^{2}L, Y+\ga(Y)L,\frac{1}{2}(\ga,\ga)L+\underline{L}+\ga^{\sharp}\Big)\\=&
\frac{1}{2}(\ga,\ga)\cdot\Omega^{2}\cdot R\big(X,L,Y,L\big)-\Omega^{2}\cdot\ga(Y)\cdot R\big(X,L,\ga^{\sharp},L\big)\\
&-\Omega^{2}\cdot\ga(Y)\cdot R\big(X,L,\underline{L},L\big)+\Omega^{2}\cdot R\big(X,L,Y,\underline{L}\big)+\Omega^{2}\cdot R\big(X,L,Y,\ga^{\sharp}\big).
\end{split}
\end{equation*}
Therefore, since $tr'_{X',Y'}=tr_{X,Y}$ we have
\begin{equation}
\begin{split}
tr'_{X',Y'}R'(X',L',Y',\underline{L}')=&
\frac{1}{2}(\ga,\ga)\cdot\Omega^{2}\cdot tr\a-\Omega^{2}\cdot\a(\ga^{\sharp},\ga^{\sharp})-\Omega^{2}\cdot\beta(\ga^{\sharp})\\
&+\Omega^{2}\cdot tr_{X,Y}R\big(X,L,Y,\underline{L}\big)+\Omega^{2}\cdot tr_{X,Y}R\big(X,L,Y,\ga^{\sharp}\big).
\end{split}
\label{5}
\end{equation}
Therefore, in view of \eqref{newsigmatonos}, \eqref{divergencezeta}, \eqref{3}, \eqref{4}, \eqref{5},  \eqref{trchi}, \eqref{trachibar} we obtain
\begin{equation*}
\begin{split}
w'=&\underbrace{-\frac{1}{2}\di L\Omega^{2}\cdot \ga^{\sharp}}_{1}\  \underbrace{-\frac{1}{2}L\Omega^{2}\cdot \divv\ga}_{2}\ \underbrace{-\frac{1}{2}\lapp\Omega^{2}}_{3}\ \underbrace{+(\di\Omega^{2},\zeta)}_{4}+\Omega^{2}\cdot\divv\zeta\ 
\underbrace{+\chi(\ga^{\sharp},\nabb\Omega^{2})}_{5}\\
&+\Omega^{2}\cdot (\divv\chi)(\ga^{\sharp})\  {\underbrace{+\Omega^{2}\cdot(\chi,\nabb\ga)}_{6}}\\
&+\frac{1}{2}\cdot tr\chi\cdot\left[\, \underbrace{-(L\Omega^{2})\cdot(\ga,\ga)}_{7}\ \underbrace{-(\ga^{\sharp},\nabb\Omega^{2})}_{8}\ \underbrace{+2\Omega^{2}\cdot(\zeta,\gamma)}_{9}+2\Omega^{2}\cdot\chi(\gamma^{\sharp},\ga^{\sharp})\right]\\
&\underbrace{-\frac{1}{2}(LL\Omega^{2})\cdot(\ga,\ga)}_{10}\ \underbrace{-\frac{1}{2}(L\Omega^{2})\cdot \gi\big(\li_{L}\ga^{\sharp},\ga^{\sharp}\big)}_{11}\ \underbrace{- \frac{1}{2}(\ga,\di L\Omega^{2})}_{1}\ \underbrace{+\chi(\gamma^{\sharp},\nabb\Omega^{2})}_{5}\ \underbrace{+(L\Omega^{2})\cdot(\zeta,\ga)}_{12}\\
&+\Omega^{2}\cdot (\li_{L}\zeta^{\sharp},\ga^{\sharp})
\ \underbrace{+(L\Omega^{2})\cdot\chi(\ga^{\sharp},\ga^{\sharp})}_{13}\ \underbrace{-2\Omega^{2}\cdot(\chi\times\chi)(\ga^{\sharp},\ga^{\sharp})}_{14}\ \underbrace{+\Omega^{2}\cdot\big(\li_{L}\chi\big)(\ga^{\sharp},\ga^{\sharp})+\Omega^{2}\cdot\chi\big(\li_{L}\ga^{\sharp},\ga^{\sharp}\big)}_{15}\\
&\underbrace{+(L\Omega)^{2}\cdot (\ga,\ga)}_{10}\ \underbrace{+2\cdot L\Omega\cdot(\ga,\di\Omega)}_{1}\ \underbrace{-L\Omega^{2}\cdot (\zeta,\ga)}_{12}\ \underbrace{-L\Omega^{2}\cdot \chi(\ga^{\sharp},\ga^{\sharp})}_{13}\\
&\underbrace{+(\nabb\Omega,\nabb\Omega)}_{3}\,\underbrace{-2\Omega\cdot(\zeta,\di\Omega)}_{4}\ \underbrace{-2\Omega\cdot \chi(\ga^{\sharp},\nabb\Omega)}_{5}+\Omega^{2}\cdot(\zeta,\zeta)\\
&+2\Omega^{2}\cdot\chi(\ga^{\sharp},\zeta)\ \underbrace{+\Omega^{2}\cdot(\chi\times\chi)(\ga^{\sharp},\ga^{\sharp})}_{14}\\
&-\frac{1}{2}\Omega^{2}\cdot(\ga,\ga)\cdot(\chi,\chi)+2\Omega^{2}\cdot\chi(\ga^{\sharp},\zeta^{\sharp})-\Omega^{2}\cdot(\chi,\underline{\chi})\ \underbrace{-\Omega^{2}\cdot\chi\big(\nabb_{L}\ga^{\sharp},\ga^{\sharp}\big)}_{15}\\
&\underbrace{+\Omega^{2}\cdot(\chi\times\chi)(\ga^{\sharp},\ga^{\sharp})}_{14}\ \underbrace{-\Omega^{2}\cdot(\chi,\nabb\ga)}_{6}\\
&+\frac{1}{2}\cdot\Omega^{2}\cdot tr\chi\cdot\left[ tr\underline{\chi}+\divv\ga +\gi\big(\li_{L}\ga^{\sharp},\ga^{\sharp}\big)+\frac{1}{2}tr\chi\cdot(\ga,\ga)\ \underbrace{-2(\zeta,\ga)}_{9}\right]\\
&-\frac{1}{2}(\ga,\ga)\cdot\Omega^{2}\cdot tr\a\ \underbrace{+\Omega^{2}\cdot\a(\ga^{\sharp},\ga^{\sharp})}_{15}+\Omega^{2}\cdot\beta(\ga^{\sharp})\\
&-\Omega^{2}\cdot tr_{X,Y}R\big(X,L,Y,\underline{L}\big)-\Omega^{2}\cdot tr_{X,Y}R\big(X,L,Y,\ga^{\sharp}\big)\\
&\underbrace{+\Omega\cdot\lapp\Omega}_{3}\,  \underbrace{\!+2\Omega\cdot(\ga,\di(L\Omega))}_{1}\ \underbrace{+\frac{1}{2}L\Omega^{2}\cdot\divv\ga}_{2}\ \underbrace{+\frac{1}{2}tr\chi\cdot (\ga,\di\Omega^{2})}_{8}\\
&\underbrace{+\frac{1}{2}tr\chi\cdot (L\Omega^{2})\cdot(\ga,\ga)}_{7}\ \underbrace{-\chi(\ga^{\sharp},\nabb\Omega^{2})}_{5}\ \underbrace{+(LL\Omega)\cdot\Omega\cdot (\ga,\ga)}_{10} \ \underbrace{+{{\frac{1}{2}(L\Omega^{2})\cdot \gi\big(\li_{L}\ga^{\sharp},\ga^{\sharp}\big)}}}_{11}.
\end{split}
\end{equation*}
The terms associated with the same number cancel out. For the curvature component $\a$ we used that \[\Omega^{2}\cdot\big(\li_{L}\chi\big)(\ga^{\sharp},\ga^{\sharp})+\Omega^{2}\cdot \chi\big(\li_{L}\ga^{\sharp},\ga^{\sharp} \big)=\Omega^{2}\cdot\chi(\nabb_{L}\ga^{\sharp},\ga^{\sharp})-\Omega^{2}\cdot\a(\ga^{\sharp},\ga^{\sharp}), \] 
which follows from the second variational formula:
\[\li_{L}\chi=-\a+\chi\times\chi. \]
Note that by the null Codazzi equation corresponding the embedding of $S_{v}$ in $\m$ we obtain
\[tr_{X,Y}R\big(X,L,Y,\ga^{\sharp}\big)=\chi(\ga^{\sharp},\zeta^{\sharp})+(\divv\chi)(\ga^{\sharp})-(tr\chi)\cdot(\zeta,\ga^{\sharp})-(\di tr\chi,\ga^{\sharp}) \] 
and hence 
\begin{equation}
\begin{split}
w'=& \boxed{\Omega^{2}\cdot\divv\zeta}\  \underbrace{+\Omega^{2}\cdot (\divv\chi)(\ga^{\sharp})}_{1}
+\Omega^{2}\cdot tr\chi\cdot\chi(\gamma^{\sharp},\ga^{\sharp})\ \underbrace{+\Omega^{2}\cdot (\li_{L}\zeta^{\sharp},\ga^{\sharp})}_{2} +\boxed{\Omega^{2}\cdot(\zeta,\zeta)}\\
&\underbrace{+2\Omega^{2}\cdot\chi(\ga^{\sharp},\zeta)}_{2}-\frac{1}{2}\Omega^{2}\cdot(\ga,\ga)\cdot(\chi,\chi)\ \underbrace{+2\Omega^{2}\cdot\chi(\ga^{\sharp},\zeta^{\sharp})}_{2}\boxed{ -\Omega^{2}\cdot(\chi,\underline{\chi})}\\
&+\frac{1}{2}\cdot\Omega^{2}\cdot tr\chi\cdot\left[\boxed{  tr\underline{\chi}}+\divv\ga +\gi\big(\li_{L}\ga^{\sharp},\ga^{\sharp}\big)+\frac{1}{2}tr\chi\cdot(\ga,\ga)\right]-\frac{1}{2}(\ga,\ga)\cdot\Omega^{2}\cdot tr\a\ \ \underbrace{+\Omega^{2}\cdot\beta(\ga^{\sharp})}_{2}\\
&\boxed{ -\Omega^{2}\cdot tr_{X,Y}R\big(X,L,Y,\underline{L}\big)} -\Omega^{2}\cdot \left[\, \underbrace{\chi(\ga^{\sharp},\zeta^{\sharp})}_{2}\ \underbrace{+(\divv\chi)(\ga^{\sharp})}_{1}-(tr\chi)\cdot(\zeta,\ga^{\sharp})-(\di tr\chi,\ga^{\sharp})\right].\\
\end{split}
\label{newestsigma}
\end{equation}The boxed terms add up to $\Omega^{2}\cdot w$. 
For the curvature component $\beta$ we have used that 
\[ \Omega^{2}\cdot \gi\big(\li_{L}\zeta^{\sharp},\ga^{\sharp}\big)=-3\Omega^{2}\cdot \chi(\ga^{\sharp},\zeta^{\sharp})-\Omega^{2}\cdot\beta(\ga^{\sharp}), \]
which follows from
\begin{equation}
\begin{split}
(\nabb_{4}\zeta)(X)&=(\nabla_{4}\eta)(X)=\nabla_{4}(\eta(X))-\eta(\nabla_{4}X)= \nabla_{4}\big(g(\nabla_{3}e_{4},X)\big)-g(\nabla_{3}e_{4},\nabla_{4}X)\\
&=g(\nabla_{4}\nabla_{3}e_{4},X)+g(\nabla_{3}e_{4},\nabla_{4}X)-g(\nabla_{3}e_{4},\nabla_{4}X)\\&=g\big(R(e_{4},e_{3})e_{4}+\nabla_{3}\nabla_{4}e_{4}+\nabla_{[e_{4},e_{3}]}e_{4},X\big)\\
&=R(X,e_{4},e_{4},e_{3})+\omega g(\nabla_{3}e_{4},X)+g\left(\nabla_{\left(\underline{\eta}^{\sharp}-\eta^{\sharp}-\omega e_{3}+\underline{\omega}e_{4}\right)}e_{4},X\right)\\
&=-\beta(X)+\omega \eta(X)+g\left(\nabla_{\left(\underline{\eta}^{\sharp}-\eta^{\sharp}\right)}e_{4},X\right)-\omega\eta(X)\\
&=\left(-\beta+\chi\cdot\left(\underline{\eta}^{\sharp}-\eta^{\sharp}\right) \right)(X)=\Big(-\beta-2\chi\cdot \zeta^{\sharp}\Big)(X),
\end{split}
\label{propagationzeta}
\end{equation}
where we have used that $\eta=-\underline{\eta}=\zeta$ for $\Omega=1$. Equivalently, we obtain
\begin{equation}
\li_{L}\zeta=-\beta -\chi\cdot\zeta^{\sharp}.
\label{eq:zeta}
\end{equation} 
Therefore, \eqref{higherorder},\eqref{epsilon} and \eqref{newestsigma} imply 
\begin{equation}
\begin{split}\o^{\s'}\psi=&\Omega^{2}\cdot\o^{\s}\psi +\mathcal{E}\psi+\mathcal{R}\psi,
\end{split}
\label{10}
\end{equation}
where $\mathcal{R}\psi=R\cdot\psi$ and
\begin{equation*}
\begin{split}
R=& \Omega^{2}\cdot tr\chi\cdot\chi(\gamma^{\sharp},\ga^{\sharp})-\frac{1}{2}\Omega^{2}\cdot(\ga,\ga)\cdot(\chi,\chi)
+\frac{1}{2}\cdot\Omega^{2}\cdot tr\chi\cdot\left[\divv\ga +\gi\big(\li_{L}\ga^{\sharp},\ga^{\sharp}\big)+\frac{1}{2}tr\chi\cdot(\ga,\ga)\right]\\&-\frac{1}{2}(\ga,\ga)\cdot\Omega^{2}\cdot tr\a\ 
 -\Omega^{2}\cdot \left[-(tr\chi)\cdot(\zeta,\ga^{\sharp})-(\di tr\chi,\ga^{\sharp})\right].\\
\end{split}
\end{equation*}
A straightforward computation gives
\begin{equation}
\mathcal{E}\psi +\mathcal{R}\psi =\Omega^{2}\cdot \mathcal{B}\psi \cdot\left[2\chi(\ga^{\sharp},\ga^{\sharp})+2(\zeta,\ga)+\gi(\li_{L}\ga^{\sharp},\ga^{\sharp})+\divv\ga\right]+\Omega^{2}\cdot(\ga,\ga)\cdot \mathcal{B}\big(\mathcal{B}\psi \big)+ \big(2\gamma,\di(\mathcal{B}\psi )\big),
\label{11}
\end{equation}
where \[\mathcal{B}\psi =L\psi +\frac{1}{2}tr\chi\cdot\psi .\]
If we now set $\psi=\frac{1}{\phi}\cdot\Psi$ with $L\Psi=0$, then 
\[L\psi=-L\phi\cdot \frac{1}{\phi^{2}}\cdot\Psi=-\frac{1}{2}tr\chi\cdot\psi\]
and so 
\begin{equation}
\mathcal{B}\left(\frac{1}{\phi}\cdot\Psi\right)=0.
\label{12}
\end{equation}
Clearly, \eqref{10}, \eqref{11} and \eqref{12} imply \eqref{toshow}, which, using \eqref{oa}, implies the desired \eqref{fora}, which completes the proof for the case where $f=1$. Let us now return to the general case. 

  Let, therefore,  $\mathcal{S}_{1},\mathcal{S}_{2}$ be the foliations given by \eqref{s11}, \eqref{s21}. Consider the following auxiliary foliation
\[\mathcal{S}_{aux}=\Big\langle S_{1}, \big(L_{geod}\big)_{2}, \Omega_{aux}\Big\rangle,  \]
where 
\[\Omega_{aux} =  \frac{1}{f}\cdot\Omega_{{1}}, \]
and $f$ is given by \eqref{f}. 
Let $\mathcal{A}^{aux},\o^{aux}$ be the operators associated to $\mathcal{S}_{aux}$. We will show that 
\begin{equation}
\o^{aux}\big(\psi\big)=\o^{\mathcal{S}_{1}}\left(\frac{1}{f^{2}}\cdot\psi\right)
\label{newwanted}
\end{equation}
for $\psi\in C^{\infty}(\hh)$. Observe that although the geodesic vector fields and the null lapse functions do not agree for the foliations $\mathcal{S}_{aux}$ and $\mathcal{S}_{1}$, the sections of these foliations are the same. Indeed, we have 
\[L_{aux}=\Omega^{2}_{aux}\cdot \big(L_{geod}\big)_{2}=(\Omega_{1})^{2}\cdot \big(L_{geod}\big)_{1}= L_{\mathcal{S}_{1}}.\]
Therefore, equation \eqref{newwanted} can be obtained by a straightforward calculation given that 
\[ (e_{4})_{aux}=f\cdot (e_{4})_{\mathcal{S}_{1}},\  (e_{3})_{aux}=f\cdot (e_{3})_{\mathcal{S}_{1}},\ \ tr\chi_{aux}=f\cdot tr\chi_{\mathcal{S}_{1}}, \   tr\underline{\chi}_{aux}=f\cdot tr\underline{\chi}_{\mathcal{S}_{1}}, \  \zeta_{aux}=\zeta-\di\log f.   \]
 Let now $\Psi\in\vh$.   Using \eqref{oa} and \eqref{newwanted} we obtain
\begin{equation*}
\begin{split}
\mathcal{A}^{\mathcal{S}_{1}}\left(\Psi\right) = & \left(\frac{\Omega_{aux}}{\Omega_{{1}}}\right)^{2}\cdot \mathcal{A}^{aux}(f^{2}\cdot\Psi)\overset{\eqref{fora}}{=} \left(\frac{\Omega_{aux}}{\Omega_{{1}}}\right)^{2}\cdot  \mathcal{A}^{\mathcal{S}_{2}}\big(f^{2}\cdot\Psi\big)\\=&
\frac{1}{f^{2}}\cdot \mathcal{A}^{\mathcal{S}_{2}}\big(f^{2}\cdot\Psi\big),
\end{split}
\end{equation*}
where we used that $f^{2}\cdot\Psi$ is constant along the null generators of $\hh$. 

\end{proof}

 We next present some corollaries in terms of the operator $\o^{\s}$ (see \eqref{o}):
\begin{corollary}
Let $\mathcal{S},\mathcal{S}'$ be two foliations of $\hh$ as defined in \eqref{s1}, \eqref{s2}. Then for all functions $\Psi\in\vh$ we have 
\[\o^{\s'}\left(\frac{1}{\phi}\cdot\Psi\right)=\Omega^{2}\cdot\o^{\s}\left(\frac{1}{\phi}\cdot \Psi\right)  \]
on $\hh$.
\label{str}
\end{corollary}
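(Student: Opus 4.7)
The plan is to derive this corollary directly from Theorem \ref{maintheorem} by repackaging the identity it provides through the conjugation relation \eqref{oa}, which expresses $\mathcal{A}^{\mathcal{S}}$ in terms of $\o^{\mathcal{S}}$. Since by hypothesis both foliations $\mathcal{S}$ and $\mathcal{S}'$ are built from the \emph{same} null geodesic vector field $L_{geod}$, the conformal function $f$ appearing in \eqref{f} is identically $1$, so the theorem reduces to the pointwise equality $\mathcal{A}^{\s'}\Psi = \mathcal{A}^{\s}\Psi$ on $\hh$ for every $\Psi \in \vh$.

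Next I would rewrite each side of this equality using \eqref{oa}. For the foliation $\mathcal{S}$ the null lapse is $\Omega_{\mathcal{S}} = 1$, and so
\[
\mathcal{A}^{\s}\Psi = \phi\cdot\o^{\s}\!\left(\tfrac{1}{\phi}\cdot\Psi\right),
\]
while for $\mathcal{S}'$ the null lapse is $\Omega_{\mathcal{S}'}=\Omega$, giving
\[
\mathcal{A}^{\s'}\Psi = \Omega^{-2}\cdot\phi'\cdot\o^{\s'}\!\left(\tfrac{1}{\phi'}\cdot\Psi\right).
\]
At this stage I would invoke the pointwise identity $\phi' = \phi$ on $\hh$: at any point $p \in S_{v}\cap S'_{v'}$ the projection map $\Pi:(T_{p}S_{v},\gi)\to (T_{p}S'_{v'},\gi')$ of \eqref{gamm2} is an isometry, so the induced area elements $\sqrt{\gi}$ and $\sqrt{\gi'}$ agree, and hence so do the associated conformal factors defined via \eqref{phi}. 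Equating the two expressions for $\mathcal{A}^{\s}\Psi = \mathcal{A}^{\s'}\Psi$, cancelling the common factor of $\phi$, and multiplying through by $\Omega^{2}$ yields exactly
\[
\o^{\s'}\!\left(\tfrac{1}{\phi}\cdot\Psi\right)=\Omega^{2}\cdot\o^{\s}\!\left(\tfrac{1}{\phi}\cdot\Psi\right).
\]

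The main obstacle is really already absorbed by Theorem \ref{maintheorem}: once one has the covariance of $\mathcal{A}^{\mathcal{S}}$ and the observation that $\phi$ depends only on the point (not the section through it), the corollary is a purely algebraic reformulation and requires no further geometric computation.
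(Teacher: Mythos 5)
Your proof is correct and is essentially the paper's own route: the corollary is precisely the intermediate identity \eqref{toshow} established inside the proof of Theorem \ref{maintheorem}, and the paper gets from the theorem statement to \eqref{toshow} by exactly the conjugation relation \eqref{oa} together with $\phi'=\phi$ pointwise, which is the same equivalence you traverse (in the reverse direction). Since \eqref{oa} is an invertible repackaging, deducing the corollary from the theorem's conclusion rather than from its proof is a harmless reordering, not a different argument.
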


\begin{corollary}
Let $\hh$ be a null hypersurface with vanishing  expansion, i.e.~$tr\chi=0$. Let also $\mathcal{S}$ and $\mathcal{S}'$ be two foliations of $\hh$. Then for all $\Psi\in\vh$ we have
\[\o^{\s'}\big(\Psi\big)=\Omega^{2}\cdot\o^{\s}\big(\Psi\big) \]
on $\hh$. 
\label{noexpansion}
\end{corollary}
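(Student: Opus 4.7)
The plan is to derive this corollary directly from Corollary \ref{str} by exploiting the fact that the vanishing of the expansion makes the conformal factor constant along the null generators, and then absorbing the factor $\phi$ into the test function.

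First, I would recall the first variational formula \eqref{lphi}, which reads $L\phi = \tfrac{1}{2}\Omega \cdot tr\chi \cdot \phi$. In the setting of Corollary \ref{str} the foliation $\s$ is geodesic, so this specializes to $L\phi = \tfrac{1}{2} tr\chi \cdot \phi$, and under the hypothesis $tr\chi = 0$ we obtain $L\phi \equiv 0$ on $\hh$. Hence $\phi \in \vh$. Since $\phi > 0$ everywhere, $\frac{1}{\phi} \in \vh$ as well, and because $\vh$ is closed under pointwise multiplication, for any $\Psi \in \vh$ we have $\phi \cdot \Psi \in \vh$.

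Next, I would apply Corollary \ref{str} with $\phi \cdot \Psi$ playing the role of $\Psi$, which is legitimate since $\phi \cdot \Psi$ lies in $\vh$. This yields
\[
\o^{\s'}\!\left(\frac{1}{\phi} \cdot (\phi \cdot \Psi)\right) \;=\; \Omega^{2} \cdot \o^{\s}\!\left(\frac{1}{\phi} \cdot (\phi \cdot \Psi)\right)
\]
on $\hh$, which collapses at once to $\o^{\s'}(\Psi) = \Omega^{2} \cdot \o^{\s}(\Psi)$, as desired.

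The only substantive thing to check is the claim $\phi \in \vh$; everything else is a one-line algebraic manipulation, so I do not anticipate a serious obstacle. For completeness I would also note that the analogous assertion holds for the primed foliation: by \eqref{chichi} we have $\chi' = \Omega \cdot \chi$, so $tr'\chi'$ vanishes along with $tr\chi$, and hence $L'\phi' = 0$ as well, consistent with the pointwise identity $\phi' = \phi$ that was used in the proof of Theorem \ref{maintheorem}.
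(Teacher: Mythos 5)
Your proposal is correct and is essentially the argument the paper leaves implicit (the corollary is stated without proof as an immediate consequence of Corollary \ref{str}): the key observation that \eqref{lphi} with $tr\chi=0$ forces $L\phi=0$, hence $\phi\in\vh$ and $\phi\cdot\Psi\in\vh$, so Corollary \ref{str} applied to $\phi\cdot\Psi$ collapses to the stated identity. Note only that the statement, despite saying ``two foliations,'' must be read in the setting of \eqref{s1}--\eqref{s2} (common geodesic generator, $f=1$) for the bare factor $\Omega^{2}$ to make sense, which is exactly the case your argument covers.
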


\begin{corollary}
Let $S$ and $S'$ be two foliations of $\hh$ with a common section $S_{v_{0}}$. Then
\[\o_{v_{0}}^{\s'}\big(\Psi\big)=\Omega^{2}\cdot\o_{v_{0}}^{\s}\big(\Psi\big) \]
for all functions $\Psi\in C^{\infty}(S_{v_{0}})$. 
\label{cormore}
\end{corollary}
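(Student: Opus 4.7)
The plan is to deduce the corollary from the decomposition
\[\o^{\s'}\psi = \Omega^{2}\cdot \o^{\s}\psi + \mathcal{E}\psi + \mathcal{R}\psi\]
on $\hh$ established as equation \eqref{10} in the proof of Theorem \ref{maintheorem}, by showing that the error terms $\mathcal{E}\psi + \mathcal{R}\psi$ vanish identically along the common section. Both $\o^{\s}_{v_{0}}$ and $\o^{\s'}_{v_{0}}$ are tangential second-order operators on $S_{v_{0}}$, so once the pointwise identity is established along the common section it suffices to apply it to any smooth extension $\tilde{\Psi} \in C^{\infty}(\hh)$ of the given $\Psi$ and restrict to $S_{v_{0}}$.

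The key geometric observation is that the offset 1-form $\ga$ introduced in \eqref{gamm2}, which measures the failure of $T_{p} S_{v}$ and $T_{p} S'_{v'}$ to coincide at $p \in S_{v} \cap S'_{v'}$, vanishes identically at every point of $S_{v_{0}}$. Indeed, whenever $S_{v_{0}}$ appears in both foliations, $T_{p}S_{v_{0}}$ is simultaneously the relevant $\s$- and $\s'$-tangent space, so $X = X'$ in \eqref{gamm2} and hence $\ga = 0$. Because $\ga$ vanishes on all of $S_{v_{0}}$, its tangential derivatives along $S_{v_{0}}$ automatically vanish as well; in particular $\divv \ga = 0$ on $S_{v_{0}}$.

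Inspecting the explicit expression \eqref{11} for $\mathcal{E}\psi + \mathcal{R}\psi$, every summand carries at least one factor drawn from the collection $\{\ga,\, (\ga,\ga),\, \divv \ga\}$, with the transverse Lie-derivative contributions $\li_{L}\ga^{\sharp}$ always appearing contracted against another $\ga^{\sharp}$ through the metric. Consequently $\mathcal{E}\psi + \mathcal{R}\psi$ vanishes on $S_{v_{0}}$ for every smooth $\psi$ on $\hh$, and \eqref{10} reduces to $\o^{\s'}\psi = \Omega^{2}\cdot \o^{\s}\psi$ at $S_{v_{0}}$. Extending the given $\Psi \in C^{\infty}(S_{v_{0}})$ to any smooth $\tilde{\Psi}$ on $\hh$ and restricting the identity to $S_{v_{0}}$ yields the corollary. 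The main point to guard against is the verification that $\divv \ga = 0$ along $S_{v_{0}}$: this amounts to the remark that the tangential divergence on $S_{v_{0}}$ of a field that vanishes on all of $S_{v_{0}}$ is computed from tangential derivatives of the zero field, and is therefore itself zero.
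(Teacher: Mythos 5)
Your argument is correct and is precisely the route the paper intends: Corollary \ref{cormore} is stated without proof as a pointwise consequence of the computation in Theorem \ref{maintheorem}, and your observation that the $1$-form $\gamma$ of \eqref{gamm2} --- and hence $\divv\gamma$ and every error term in \eqref{10}--\eqref{11} --- vanishes identically along a common section is exactly the content needed. Since \eqref{10} was derived for arbitrary $\psi\in C^{\infty}(\hh)$ (the restriction to $\vh$ enters only at \eqref{12}) and the operators are tangential, your extension-and-restriction step is also sound.
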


All the previous results hold pointwise.  The following corollary  yields a result about the Kernel of the elliptic operator of $\o^{\s'}_{v'}$ in terms of the Kernels of $\o^{\s}_{v}$:

\begin{corollary}
Let $\s_{1},\s_{2}$ be the foliations given by \eqref{s11},\eqref{s21}, respectively, and the function $f$ given by \eqref{f}.
If $\Psi\in\vh$ then we have
\[\Psi\in Ker\big(\mathcal{A}_{v}^{\s_{1}}\big)\text{ for all }v\in\mathbb{R} \ \ \ \Rightarrow \ \ \ f^{2}\cdot\Psi\in Ker\big(\mathcal{A}_{v'}^{\s_{2}}\big)\text{ for all }v'\in\mathbb{R}\]
and 
\[\frac{1}{\phi}\cdot{\Psi}\in Ker\big(\mathcal{O}_{v}^{\s_{1}}) \text{ for all }v\in\mathbb{R}\ \ \ \Rightarrow \ \ \ f^{2}\cdot\frac{1}{\phi}\cdot{\Psi}\in Ker(\mathcal{O}_{v'}^{\s_{2}})\text{ for all }v'\in\mathbb{R}.\]
\label{corkernel}
\end{corollary}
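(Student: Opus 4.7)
The plan is to deduce both implications as immediate consequences of Theorem \ref{maintheorem}, using the pointwise identity \eqref{oa} as the dictionary between $\mathcal{A}^{\mathcal{S}}$ and $\mathcal{O}^{\mathcal{S}}$. The only preliminary observation needed is that since $f,\Psi\in\vh$ are each constant along the null generators of $\hh$, the product $f^{2}\cdot\Psi$ is also in $\vh$, so it is an admissible test function for the covariance identity \eqref{maineqcor}.

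For the first implication, I would first note that the condition ``$\Psi\in Ker(\mathcal{A}_{v}^{\mathcal{S}_{1}})$ for all $v\in\mathbb{R}$'' is equivalent to the global pointwise identity $\mathcal{A}^{\mathcal{S}_{1}}\Psi\equiv 0$ on $\hh$, simply because $\mathcal{A}_{v}^{\mathcal{S}_{1}}$ is the restriction of $\mathcal{A}^{\mathcal{S}_{1}}$ to $S^{1}_{v}$ and the leaves $S^{1}_{v}$ exhaust $\hh$. Applying \eqref{maineqcor} then yields $\mathcal{A}^{\mathcal{S}_{2}}(f^{2}\Psi)=f^{2}\cdot\mathcal{A}^{\mathcal{S}_{1}}\Psi\equiv 0$ on $\hh$, and restricting to each section $S^{2}_{v'}$ produces the desired membership $f^{2}\Psi\in Ker(\mathcal{A}_{v'}^{\mathcal{S}_{2}})$ for every $v'\in\mathbb{R}$.

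For the second implication, I would sandwich the first implication between two applications of \eqref{oa}. Since $\Omega$ and $\phi$ are pointwise nonvanishing, the identity \eqref{oa} shows that the vanishing of $\mathcal{O}^{\mathcal{S}}$ on $\tfrac{1}{\phi}\cdot\Psi$ is equivalent to the vanishing of $\mathcal{A}^{\mathcal{S}}$ on $\Psi$, for each foliation $\mathcal{S}$ separately. Thus the hypothesis $\tfrac{1}{\phi}\Psi\in Ker(\mathcal{O}_{v}^{\mathcal{S}_{1}})$ for all $v$ is equivalent to $\Psi\in Ker(\mathcal{A}_{v}^{\mathcal{S}_{1}})$ for all $v$; the first implication then provides $f^{2}\Psi\in Ker(\mathcal{A}_{v'}^{\mathcal{S}_{2}})$ for all $v'$; and a second invocation of \eqref{oa}, this time on the $\mathcal{S}_{2}$ side, translates this back into $f^{2}\cdot\tfrac{1}{\phi}\cdot\Psi\in Ker(\mathcal{O}_{v'}^{\mathcal{S}_{2}})$ for all $v'$, as required.

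There is no genuine obstacle to overcome: all of the analytic and geometric content of the corollary has been packaged into Theorem \ref{maintheorem}, and the proof amounts to unwinding the kernel condition to a pointwise vanishing statement on $\hh$, invoking the covariance identity, and translating back via \eqref{oa}. The only mild point worth being explicit about is that in the second displayed implication the symbol $\phi$ must be interpreted as the conformal factor of the respective foliation in which the operator $\mathcal{O}$ is being evaluated, consistent with the definition \eqref{phi}.
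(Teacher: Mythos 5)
Your proposal is correct and follows essentially the same route as the paper: the paper likewise reduces the kernel hypothesis to the pointwise vanishing $\mathcal{A}^{\s_{1}}\Psi=0$ on all of $\hh$ (by ``sweeping'' each section $S'_{v'}$ with the leaves of $\s_{1}$), applies the covariance identity \eqref{maineqcor} of Theorem \ref{maintheorem}, and uses \eqref{oa} to pass between $\mathcal{A}^{\s}$ and $\o^{\s}$. Your explicit remarks that $f^{2}\cdot\Psi\in\vh$ and that the nonvanishing of $\Omega$ and $\phi$ makes the two kernel conditions equivalent are correct and only make the paper's terse argument more transparent.
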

\begin{proof}
Let $\Psi\in\vh$ such that 
\[\mathcal{A}_{v}^{\s_{1}}\Psi= \o_{v}^{\s_{1}}\left(\frac{1}{\phi}\cdot \Psi\right)=0\]
for all $v\in\mathbb{R}$. Then, 
\begin{equation}
\mathcal{A}_{v'}^{\s_{2}}\,\big(f^{2}\cdot\Psi\big)=\o_{v'}^{\s_{2}}\left(\frac{1}{\phi}\cdot f^{2}\cdot\Psi\right)=0
\label{indeed}
\end{equation}
for all $v'\in\mathbb{R}$. The corollary  can be shown by  sweeping any section $S'_{v'}$ with leaves of the $\mathcal{S}$ foliation as depicted below:
 \begin{figure}[H]
   \centering
		\includegraphics[scale=0.127]{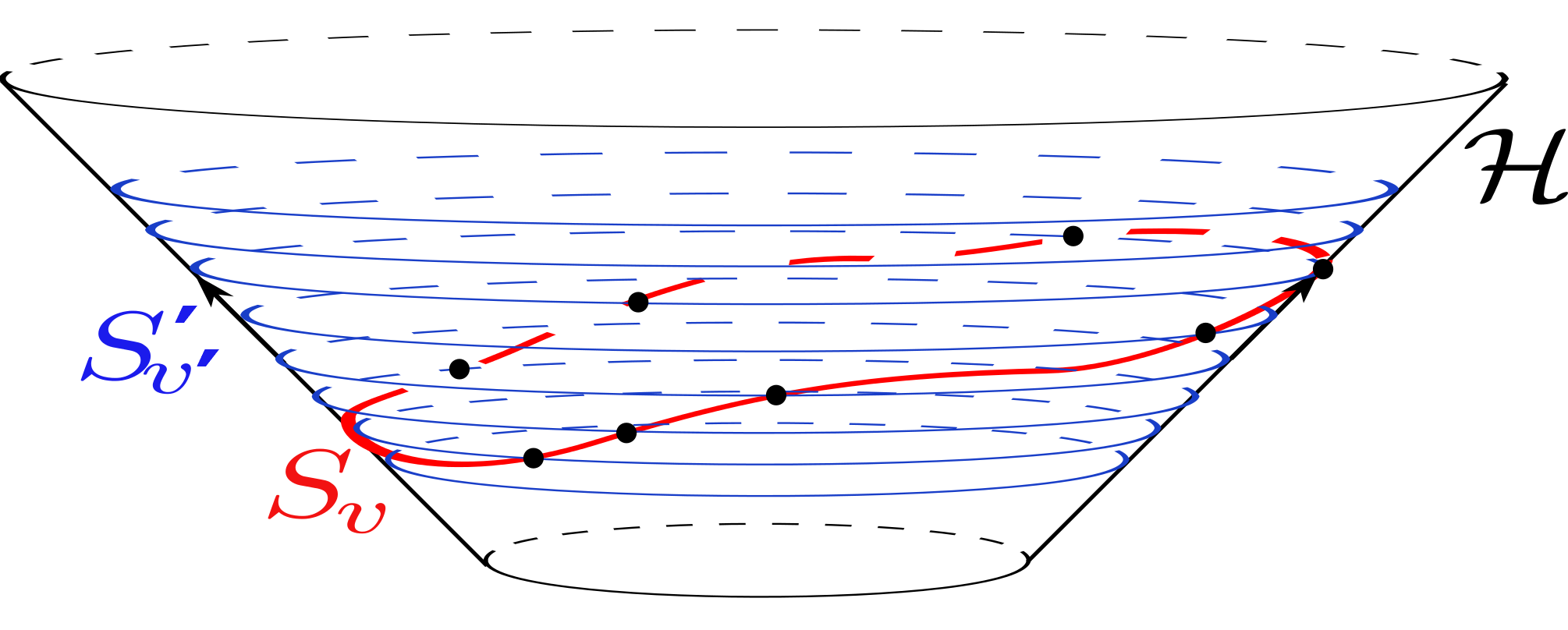}
	\label{fig:nullsection2}
\end{figure}
and applying Theorem \ref{maintheorem} for all points of the section $S'_{v'}$.
\end{proof}

\section{Killing horizons and black holes}
\label{sec:KillingHorizons}

In this section we investigate the operators $\mathcal{A}^{\s},\o^{\s}$ in the context of black hole spacetimes. Specifically, we consider the case where $\hh$ is the event horizon of a black hole. We assume that $\hh$ is a Killing horizon, that is there is a Killing vector field $\xi$ normal to $\hh$. Note that in this case $\xi$ satisfies
\begin{equation}
\nabla_{\xi}\xi=\kappa\cdot \xi, \text{ on  }\hh,
\label{surface}
\end{equation}
where $\kappa$ is constant along the null generators of $\hh$. In consistency with the zeroth law of black hole mechanics, we will assume that $\kappa$ is globally constant on $\hh$. Then, the constant $\kappa$ is called the \textit{surface gravity} of $\hh$. If $\kappa=0$, then the null hypersurface $\hh$ is called an \textit{extremal horizon}.

We first show the following
\begin{lemma}
Let $\hh$ be a Killing horizon and $\mathcal{S}=\left\langle S_{0},L_{geod}, \Omega=1\right\rangle$ be a geodesic foliation, as defined in Section \ref{sec:TheGeometryOfNullHypersurfaces}. Assume that $\xi$ is a Killing vector field normal to $\hh$ and such that \eqref{surface} is satisfied. Then, the following  relations hold on $\hh$: 
\begin{enumerate}
	\item $\chi=0$, 
\item $\li_{L}\gi=0$,
	\item $\di\kappa=g(\xi, \underline{L})\cdot \beta$, where the curvature component $\beta$ is given by \eqref{curvcompdeflist},
	\item $\li_{L}\zeta=\nabb_{L}\zeta=-\beta$,
	\item If, in addition, we take $\left.L_{geod}\right|_{S_{0}}=\xi$ and $\kappa$ is constant on $\hh$, then
	
	$\li_{L}\underline{\chi}=\nabb_{L}\underline{\chi}=\displaystyle\frac{\kappa}{f}\cdot\underline{\chi}$, where $f$ is such that $\xi=f\cdot {L}$ on $\hh$.

\end{enumerate}
\label{lemma}
\end{lemma}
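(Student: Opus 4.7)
The unifying observation is that the Killing vector field $\xi$, being null and tangent to the null generators of $\hh$, satisfies $\xi = fL$ on $\hh$ for some nowhere-vanishing function $f$. All five assertions reduce to computations built on this together with the Killing identity $\lie_\xi g = 0$. For (1), restrict the Killing condition to $S$-tangent vector fields $X, Y$ with $[L, X] = [L, Y] = 0$; a bracket computation using $[fL, X] = -X(f)L$ and $\gi(L, \cdot) = 0$ gives $(\li_\xi \gi)(X, Y) = f \cdot (\li_L \gi)(X, Y)$. The first variation formula \eqref{1stvarform} with $\Omega = 1$ reads $\li_L \gi = 2\chi$, so $f\chi = 0$ and hence $\chi = 0$; item (2) follows immediately.

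Substituting $\xi = fL$ into $\nabla_\xi \xi = \kappa\xi$ and using $\nabla_L L = 0$ gives $\kappa = L(f)$, and so $X(\kappa) = L X(f)$ for $X$ as above. For (3), the plan is to expand $X(f) = -X(g(\xi, \underline{L})) = -g(\nabla_X \xi, \underline{L}) - g(\xi, \nabla_X \underline{L})$ (using $g(\xi, \underline{L}) = -f$) and differentiate along $L$; the curvature enters through the commutator $\nabla_L \nabla_X \underline{L} = \nabla_X \nabla_L \underline{L} + R(L, X) \underline{L}$ (valid since $[L, X] = 0$), and combined with $\nabla_L \underline{L} = -\zeta^\sharp$ and the Riemann symmetry identifying $R(L, X, \underline{L}, L)$ with $\beta(X)$ up to sign, all non-curvature contributions cancel thanks to $\chi = 0$, leaving $\di\kappa = g(\xi, \underline{L}) \cdot \beta$. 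For (4), invoke the propagation equation $\li_L \zeta = -\beta - \chi \cdot \zeta^\sharp$ derived in the proof of Theorem~\ref{maintheorem} (equation \eqref{eq:zeta}) and set $\chi = 0$. The identity $\li_L \zeta = \nabb_L \zeta$ then follows because, with $\chi = 0$, $\nabla_L X = -\zeta(X) L$ has vanishing tangential projection, hence $\nabb_L X = 0$ for any $S$-tangent $X$ satisfying $[L, X] = 0$.

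For (5), integrate $L(f) = \kappa$ with the initial condition $f|_{S_0} = 1$ to obtain the explicit form $f = 1 + \kappa v$, so $f$ depends only on $v$ and in particular $X(f) = 0$ for $X \in TS_v$. This forces the flow $\phi_s$ of $\xi$ on $\hh$ to preserve the foliation, mapping $S_{v_0}$ to $S_{v(s)}$ where $1 + \kappa v(s) = e^{\kappa s}(1 + \kappa v_0)$; reading the Jacobian in the coordinates $(v, \theta^1, \theta^2)$ gives $\phi_{s\ast} L = e^{\kappa s} L$, and the combination of the isometry condition, null-direction preservation, and $g(L, \underline{L}) = -1$ then forces $\phi_{s\ast} \underline{L} = e^{-\kappa s} \underline{L}$. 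Differentiating at $s = 0$ yields $\lie_\xi \underline{L} = \kappa \underline{L}$ on $\hh$. Since $\xi$ is Killing, $\lie_\xi$ commutes with $\nabla$ and annihilates $g$; combined with $X(f) = 0$ a short computation using the expression $\underline{\chi}(X, Y) = g(\nabla_X \underline{L}, Y)$ gives $\li_\xi \underline{\chi} = \kappa \underline{\chi}$. The same bracket-type identity as in (1), applied to $\underline{\chi}$, gives $\li_\xi \underline{\chi} = f \cdot \li_L \underline{\chi}$; dividing yields the assertion, and $\li_L \underline{\chi} = \nabb_L \underline{\chi}$ follows as in (4) from $\chi = 0$.

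The technically most delicate step is (3), where keeping track of the Riemann symmetries and the resulting sign of $\beta$ requires the most care; the other items follow cleanly from the identity $\xi = fL$ and item (1) once the formalism of Section~\ref{sec:TheGeometryOfNullHypersurfaces} is in place.
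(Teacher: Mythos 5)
Items (1), (2) and (4) are correct and essentially coincide with the paper's argument: your identity $(\li_{\xi}\gi)(X,Y)=f\cdot(\li_{L}\gi)(X,Y)$ is precisely the symmetrization of the paper's tensor $\chi_{\xi}(X,Y)=g(\nabla_{X}\xi,Y)=f\,\chi(X,Y)$, and (4) is read off from \eqref{eq:zeta} with $\chi=0$ in both treatments. Your item (5) takes a genuinely different route: the paper evaluates $\lie_{L}(\nabla_{X}Y)$ in two ways (once via $\chi=0$, once via $L=f^{-1}\xi$ and the Killing property), whereas you integrate $Lf=\kappa$ to $f=1+\kappa v$, observe that the flow of $\xi$ permutes the leaves and rescales $L,\underline{L}$ by $e^{\pm\kappa s}$, and differentiate at $s=0$ to get $\lie_{\xi}\underline{L}=\kappa\underline{L}$ and hence $\li_{\xi}\underline{\chi}=\kappa\underline{\chi}$. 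This is correct and arguably more transparent, at the cost of having to verify that the flow maps sections to sections (your relation $1+\kappa v(s)=e^{\kappa s}(1+\kappa v_{0})$ does this, using $\di f=0$).

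The gap is in item (3). The inputs you permit yourself there --- $\xi=fL$, $\kappa=Lf$, $\chi=0$, the connection table, and the Ricci commutator $\nabla_{L}\nabla_{X}\underline{L}-\nabla_{X}\nabla_{L}\underline{L}=R(L,X)\underline{L}$ --- are all consequences of the intrinsic null structure equations together with item (1), and they cannot determine $\di\kappa$. Carrying out your plan explicitly: with $\chi=0$ one has $\nabla_{X}\xi=\big(Xf-f\zeta(X)\big)L$ and $g(\xi,\nabla_{X}\underline{L})=-f\zeta(X)$, so the expansion $Xf=-g(\nabla_{X}\xi,\underline{L})-g(\xi,\nabla_{X}\underline{L})$ is the tautology $Xf=Xf$. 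Differentiating it along $L$, the term $-g(\nabla_{L}\nabla_{X}\xi,\underline{L})$ reproduces $+L(Xf)$, which cancels the left-hand side $X(\kappa)=L(Xf)$ identically; what survives after the cancellations is exactly the propagation equation $L(\zeta(X))=-\beta(X)$ of item (4), not a formula for $\di\kappa$. Indeed no such formula can follow from these inputs: on a non-expanding horizon that is not Killing one may prescribe $f$, and hence $\di(Lf)$, freely.

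The missing ingredient is the second covariant derivative identity for Killing fields, $\nabla_{X}\nabla_{Y}\xi-\nabla_{\nabla_{X}Y}\xi=R(X,\xi)Y$, which is the one place where the Killing equation enters to second order (transversally to $\hh$). This is what the paper uses: taking $Y=\xi$, deducing $\nabla_{X}\xi=\mu(X)\xi$ from $\chi_{\xi}=0$, obtaining $R(X,\xi)\xi=(\nabla_{X}\kappa)\cdot\xi$, and contracting with $\underline{L}$. Alternatively, you could keep your set-up but apply this identity to the term $\nabla_{L}\nabla_{X}\xi$ (note $R(L,\xi)=f\,R(L,L)=0$, so $g(\nabla_{L}\nabla_{X}\xi,\underline{L})=\kappa\zeta(X)$); then $L(Xf)$ no longer cancels and the computation closes, yielding $X(\kappa)=g(\xi,\underline{L})\cdot\beta(X)$ after tracking the Riemann symmetries.
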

\begin{proof}

\medskip

\noindent 1. Since $\Omega=1$ we have $L=L_{geod}$, where the vector field $L$ is defined by \eqref{eq:l}. We define the following $(0,2)$ tensor field on $\hh$:
\[\chi_{\xi}(X,Y)=g(\nabla_{X}\xi,Y),\]
where $X,Y\in T_{p}\hh,\, p\in \hh$. Note that since $X,Y$ are tangential to $\hh$, the values of $\chi_{\xi}$ depend only on the restriction of $\xi$ on $\hh$. Therefore, since $\xi$ is Killing, the tensor $\chi_{\xi}$ is  antisymmetric. Moreover, $\xi$ and $L$ are both normal to $\hh$ and hence $\xi=f\cdot L$ where  $f=-g(\xi,\underline{L})$ on $\hh$. Then
\begin{equation}
\chi_{\xi}(X,Y)=g\big(\nabla_{X}(fL),Y\big)=f\cdot\chi(X,Y).
\label{chixi}
\end{equation}
Therefore, in view of the symmetry of $\chi$, the tensor field $\chi_{\xi}$ is also a symmetric (0,2) tensor field on $\hh$. Hence, $\chi=0$.

\medskip

\noindent 2. It is an immediate corollary of the first variational formula.

\medskip

\noindent 3. 
Recall that $\kappa$ is constant along the null generators of $\hh$.  Hence, $L\kappa=0$. 
 
Since $\xi$ is Killing it satisfies
\begin{equation}
\nabla^{2}_{X,Y}\xi:=\nabla_{X}\nabla_{Y}\xi-\nabla_{\nabla_{X}{Y}}\xi=R(X,\xi)Y.
\label{eq:kill}
\end{equation}
If $X,Y$ are tangential to $C$, then we obtain
\begin{equation*}
\begin{split}
g(\nabla_{X}Y,L)=-g(Y,\nabla_{X}L)=0
\end{split}
\end{equation*}
since $\chi=0$. Hence $\nabla_{X}Y$ is also tangential to $\hh$ and thus all the terms in $\eqref{eq:kill}$ depend only on the restriction of $\xi$ on $\hh$. Let us now assume that $X$ is tangential to the sections $S_{\tau}$ of the affine foliation of  $\hh$ and $Y=\xi$. Then \eqref{eq:kill} becomes
\[\nabla_{X}\nabla_{\xi}\xi-\nabla_{\nabla_{X}\xi}\xi=R(X,\xi)\xi.  \]
Note that $g(\nabla_{X}\xi,\xi)=\frac{1}{2}X\big(g(\xi,\xi)\big)=0$ and that, if $Z$ is any $S$ vector field then $g(\nabla_{X}\xi,Z)=\chi_{\xi}(X,Z)=0$. Therefore, 
\[\nabla_{X}\xi=\mu({X})\cdot\xi\] 
for some 1-form $\mu$ on $\hh$, which depends on  the function $f$ and the torsion $\zeta$. Then, \eqref{eq:kill} becomes
\[R(X,\xi)\xi=\nabla_{X}(\kappa\cdot \xi)-\mu(X)\nabla_{\xi}\xi=\big(\nabla_{X}\kappa\big)\cdot \xi+\kappa \nabla_{X}\xi-\mu(X)\nabla_{\xi}\xi=\big(\nabla_{X}\kappa\big)\cdot\xi.  \]
Taking the inner product with $\underline{L}$ we obtain
\[f^{2}\cdot g\big(R(X,L)L,\underline{L}\big)=-f\nabla_{X}\kappa  \]
and hence,
\[\nabla_{X}\kappa=-f\cdot R(\underline{L},L,X,L)=g(\xi,\underline{L})\cdot R(X,L,\underline{L},L)=g(\xi,\underline{L})\cdot \beta(X).\]

\medskip

\noindent 4. Immediate from equation \eqref{eq:zeta} and $\chi=0$.

\medskip

\noindent 5. Under the additional assumptions we have that $\left.f\right|_{S_{0}}=1$ and $Lf=\kappa$. Therefore, since by assumption  $\di\kappa=0$, we have $\di f=0$. Let $X,Y$ be $S$ tangential normal Jacobi vector fields, i.e.~$[L,X]=[L,Y]=0$. Then
\begin{equation}
[\xi,X]=-(Xf)\cdot L=0, \ \ \ [\xi,Y]=-(Yf)\cdot L=0.
\label{fornow}
\end{equation}
Observe now that 
\begin{equation*}
\begin{split}
\nabla_{X}Y=&\nabb_{X}Y+\underline{\chi}(X,Y)\cdot L+\chi(X,Y)\cdot \underline{L}\\ &\nabb_{X}Y+\underline{\chi}(X,Y)\cdot L\in T\hh
\end{split}
\end{equation*}
and that $\lie_{L}\big(\nabb_{X}Y\big)=0$ since $\li_{L}\gi=2\chi=0$. Therefore, 
\begin{equation}
\lie_{L}\big(\nabla_{X}Y\big)=L\big(\underline{\chi}(X,Y)\big)\cdot L. 
\label{lhs}
\end{equation}
Since $\nabla_{X}Y\in T\hh$, then LHS of \eqref{lhs} depends only on the restriction of $L$ on $\hh$. Therefore, since $L=\frac{1}{f}\cdot \xi$ on $\hh$, we obtain
\begin{equation*}
\begin{split}
\lie_{L}\big(\nabla_{X}Y\big)=&\frac{1}{f}\cdot \lie_{\xi}\big(\nabla_{X}Y\big)+\frac{(\nabla_{X}Y)(f)}{f^{2}}\cdot\xi\\
=&\frac{1}{f}\cdot \nabla_{[\xi,X]}Y+\frac{1}{f}\cdot \nabla_{X}[\xi,Y]+\big(\nabb_{X}Y\big)(f)\cdot \frac{1}{f^{2}}\cdot \xi+\underline{\chi}(X,Y)\frac{(Lf)}{f^{2}}\cdot \xi\\
=& \underline{\chi}(X,Y)\cdot\frac{\kappa}{f}\cdot L, 
\end{split}
\end{equation*}
where we used  in the second line that  $\xi$ is a Killing vector field and in the third line that $\big(\nabb_{X}Y\big)(f)=0$ and \eqref{fornow}. 
\end{proof}
If we trace the last identity of the above lemma we obtain
\[L tr\underline{\chi}=\frac{\kappa}{f}\cdot tr\underline{\chi}.  \]
Since $Lf=\kappa$ we obtain
\begin{equation}
tr\underline{\chi}=\left.tr\underline{\chi}\right|_{S_{0}}\cdot f
\label{trchibarlambda}
\end{equation} and so 
\begin{equation}
L tr\underline{\chi}=\left.tr\underline{\chi}\right|_{S_{0}}\cdot\kappa.
\label{tracechibar}
\end{equation}
 Recalling that $tr\chi=0$ and the definition \ref{o}, we obtain
\begin{equation}
	\begin{split}
\o^{\s}\Psi=\lapp\Psi+2\zeta^{\sharp}\cdot\nabb\Psi+\left[2\divv\,\zeta^{\sharp}+\left.tr\underline{\chi}\right|_{S_{0}}\cdot\kappa\right]\cdot\Psi,
\end{split}
\label{killingoperator}
\end{equation}
with respect to the foliation $\mathcal{S}=\left\langle S_{0}, \left.L_{geod}\right|_{S_{0}}=\left.\xi\right|_{S_{0}},\,  \Omega=1\right\rangle$.  

In view of the zeroth law of black hole mechanics  we may assume that $\kappa$ is constant on $\hh$ and hence, by Lemma \ref{lemma}, we obtain that  $\beta=0$ on $\hh$ and hence $\zeta$ is conserved on $\hi$, i.e.~$\li_{L}\zeta=0$. 	From \eqref{tracechibar}, $Ltr\underline{\chi}$ does not depend on $v$. By virtue of \eqref{lphi}, the conformal factor $\phi$ also does not depend on $v$. Therefore, the operators $\mathcal{A}_{v}^{\s},\mathcal{O}_{v}^{\s}$ do not depend on $v$ (modulo identifying the sections $S_{v}$ with $S_{0}$ via the diffeomorphisms $\Phi_{v}$). 

If we now consider a general foliation $\tilde{\mathcal{S}}=\left\langle S_{0},\left.L_{geod}\right|_{S_{0}}=\left.\xi\right|_{S_{0}},\Omega\right\rangle$, then
\[\frac{1}{\Omega^{2}}\cdot \o^{\tilde{\s}}\Psi=\o^{{\s}}\Psi \]
and hence the operators $\mathcal{A}_{v}^{\tilde{\s}}$, $\frac{1}{\Omega^{2}}\cdot \o^{\tilde{\s}}_{v}\Psi$ do not depend on $v$ (again modulo identifying the sections $S_{v}$ with $S_{0}$ via the diffeomorphisms $\Phi_{v}$). 

\begin{remark}
Given a foliation $\mathcal{S}=\left\langle S_{0}, \left.L_{geod}\right|_{S_{0}}=\left.\xi\right|_{S_{0}},\,  \Omega=1\right\rangle$ we can rewrite the operator $\mathcal{O}^{\s}_{v}$ given by \eqref{killingoperator} as follows
\begin{equation}
\o_{v}^{\s}\Psi=\underbrace{\lapp\Psi+2\zeta^{\sharp}\cdot\nabb\Psi+2\divv\,\zeta^{\sharp}\cdot \Psi}_{\mathcal{K}_{v}^{\s}\Psi}+\underbrace{\left.tr\underline{\chi}\right|_{S_{0}}\cdot\kappa\cdot\Psi}_{\mathcal{T}_{v}^{\s}\Psi},
\label{killingoperator1}
\end{equation}
Note that the section $S_{0}$ can be freely chosen for the foliation $\mathcal{S}$. In view of $\li_{L}\gi=\chi=0$, $L\Psi=0$ and \eqref{zetazetatonos}, \eqref{nabla} and \eqref{laplacians} the operator $\mathcal{K}_{v}^{\s}$ does not depend on the choice of the section $S_{0}$ (again, modulo identifying all sections of $\hh$ via the flow of the null generators). However, in view of \eqref{tra}, the operator $\mathcal{T}_{v}^{\s}$ depends on $S_{0}$. Specifically, if we consider another foliation $\mathcal{S}'=\left\langle S_{0}', \left.L_{geod}\right|_{S_{0}'}=\left.\xi\right|_{S_{0}'},\,  \Omega=1\right\rangle$ then in view of the main Theorem \ref{maintheorem}, and recalling that $L\phi=0$, we have that
\begin{equation}
\o_{v}^{\s}(\Psi)=\frac{1}{f^{2}}\cdot\o_{v}^{\s'}\big(f^{2}\cdot\Psi\big),
\label{changeofsection}
\end{equation}
where $f$ is such that $L_{geod}'=f^{2}\cdot L_{geod}$, where $L_{geod}',L_{geod}$ denote the geodesic vector fields of $\s',\s$, respectively.
\label{remarkgiazeta}
\end{remark}

Recall that for extremal black holes we have $\kappa=0$ and hence 
\begin{equation}
\o_{v}^{\s}\Psi=\lapp\Psi+\divv\big(2\Psi\cdot\zeta\big).
\label{operatorextremal}
\end{equation}
Following the argument of \cite{hj2012} one obtains that $dimKer(\o_{v}^{\s})=1$ for all $v$. Indeed, since $\int_{S_{v}}\o^{\s}_{v}\psi=0$ for all $\psi$, the (unique) positive principal eigenfunction $\Psi$ of $\o_{v}^{\s}$ must lie in the Kernel of $\o_{v}^{\s}$. Since $\o_{v}^{\s}$ does not depend on $v$ we deduce that there is a unique (up to constant factors) smooth function $\Psi\in V_{\hh}$ such that $\o^{\mathcal{S}}(\Psi)=0$.

Summarizing we have shown the following
\begin{proposition}
Let $\hh$ be a Killing horizon with constant surface gravity $\kappa$. Let also $S=\left\langle S_{0},\left.L_{geod}\right|_{S_{0}}=\left.\xi\right|_{S_{0}},\Omega\right\rangle$ be a foliation of $\hh$, as defined in Section \ref{sec:TheGeometryOfNullHypersurfaces}. Then the operators $\mathcal{A}^{\s}_{v},\, \frac{1}{\Omega^{2}}\cdot\o^{\mathcal{S}}_{v}$, given by \eqref{a} and \eqref{o}, respectively, do not depend on $v$ modulo identifying $S_{v}$ with $S_{0}$ via the diffeomorphism $\Phi_{v}$, i.e.
\[\big(\Phi_{v}\big)^{*}\mathcal{A}_{v}^{\s}=\mathcal{A}_{0}^{\s}, \ \ \ \ \big(\Phi_{v}\big)^{*}\left(\frac{1}{\Omega^{2}}\cdot\mathcal{O}_{v}^{\s}\right)=\frac{1}{\Omega^{2}}\cdot\mathcal{O}_{0}^{\s}   .\] Moreover, if $\hh$ is an extremal horizon (i.e.~$\kappa=0$) then $dimKer\big(\mathcal{A}^{\s}_{v}\big)=dimKer\big(\o_{v}^{\mathcal{S}}\big)=1$, for all $v\in\mathbb{R}$. 
\label{lastprop}
\end{proposition}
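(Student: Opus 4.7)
The plan is to reduce everything to a canonical geodesic reference foliation, exploit Lemma \ref{lemma} to establish $v$-independence there, and then handle the extremal kernel statement with a Krein--Rutman argument. Introduce the auxiliary geodesic foliation $\tilde{\mathcal{S}}=\langle S_{0},\xi|_{S_{0}},1\rangle$, which shares the initial section and the geodesic generator of $\mathcal{S}$ but has null lapse equal to one. Both $\mathcal{S}$ and $\tilde{\mathcal{S}}$ are induced by the same geodesic vector field, so Theorem \ref{maintheorem} applies with $f=1$, yielding $\mathcal{A}^{\mathcal{S}}\Psi=\mathcal{A}^{\tilde{\mathcal{S}}}\Psi$ for every $\Psi\in\vh$. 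Combined with Corollary \ref{str}, which in this setting reads $\o^{\mathcal{S}}(\phi^{-1}\Psi)=\Omega^{2}\cdot\o^{\tilde{\mathcal{S}}}(\phi^{-1}\Psi)$ for $\Psi\in\vh$, and noting that $\phi^{-1}\Psi|_{S_v}$ ranges over all of $C^{\infty}(S_{v})$ as $\Psi$ ranges over $\vh$, we deduce the pointwise identities of differential operators on each section: $\mathcal{A}^{\mathcal{S}}_{v}=\mathcal{A}^{\tilde{\mathcal{S}}}_{v}$ and $\Omega^{-2}\o^{\mathcal{S}}_{v}=\o^{\tilde{\mathcal{S}}}_{v}$. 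It therefore suffices to establish the $v$-independence statement for the geodesic foliation $\tilde{\mathcal{S}}$.

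For $\tilde{\mathcal{S}}$, Lemma \ref{lemma} reduces the claim to bookkeeping. Part 1 of the Lemma yields $\chi=0$, so the first variational formula \eqref{1stvarform} gives $\li_{L}\gi=0$ and \eqref{lphi} gives $L\phi=0$, making $\gi$ and $\phi$ invariant under the flow of the null generators. The zeroth-law hypothesis $d\kappa=0$ combined with part 3 forces $\beta=0$, whereupon part 4 yields $\li_{L}\zeta=0$, so the torsion is likewise $\Phi_v$-invariant. Finally, \eqref{tracechibar} reduces to $L\,tr\underline{\chi}=\left.tr\underline{\chi}\right|_{S_{0}}\cdot\kappa$, manifestly constant along each null generator. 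Substituting into the explicit formula \eqref{killingoperator} shows that every coefficient of $\o^{\tilde{\mathcal{S}}}_{v}$ is $\Phi_v$-invariant, hence $(\Phi_v)^{*}\o^{\tilde{\mathcal{S}}}_{v}=\o^{\tilde{\mathcal{S}}}_{0}$. Relation \eqref{oa} upgrades this to the analogous statement for $\mathcal{A}^{\tilde{\mathcal{S}}}_{v}$, and the reduction of the previous paragraph transfers both conclusions to $\mathcal{A}^{\mathcal{S}}_{v}$ and $\Omega^{-2}\o^{\mathcal{S}}_{v}$.

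For the extremal case $\kappa=0$, equation \eqref{operatorextremal} puts $\o^{\mathcal{S}}_{v}$ into divergence form $\o^{\mathcal{S}}_{v}\psi=\divv(\nabb\psi+2\psi\zeta^{\sharp})$, so integration over the closed surface $S_{v}$ yields $\int_{S_{v}}\o^{\mathcal{S}}_{v}\psi=0$ for every $\psi$. I would then view $-\o^{\mathcal{S}}_{v}$ as a second-order uniformly elliptic operator on $(S_{v},\gi)\cong \mathbb{S}^{2}$ and appeal to the Krein--Rutman / Berestycki--Nirenberg--Varadhan principal eigenvalue theory: there exists a strictly positive principal eigenfunction $\Psi>0$, unique up to scaling, with simple principal eigenvalue $\lambda_{1}$. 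Integrating $-\o^{\mathcal{S}}_{v}\Psi=\lambda_{1}\Psi$ over $S_{v}$ and using the divergence identity forces $\lambda_{1}\int_{S_{v}}\Psi=0$, hence $\lambda_{1}=0$; simplicity then gives $\dim\ker(\o^{\mathcal{S}}_{v})=1$. Finally, \eqref{oa} exhibits $\psi\mapsto\phi^{-1}\psi$ as a linear bijection $\ker(\mathcal{A}^{\mathcal{S}}_{v})\to\ker(\o^{\mathcal{S}}_{v})$, yielding $\dim\ker(\mathcal{A}^{\mathcal{S}}_{v})=1$. The delicate step is the last one: the \emph{non-symmetric}, first-order-perturbed operator $-\o^{\mathcal{S}}_{v}$ must satisfy the hypotheses of Krein--Rutman on the compact surface $S_{v}$, and the divergence identity is the crucial arithmetic pinning the principal eigenvalue to zero.
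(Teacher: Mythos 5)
Your proposal is correct and follows essentially the same route as the paper: reduce to the auxiliary geodesic foliation $\langle S_{0},\xi|_{S_{0}},1\rangle$ via Theorem \ref{maintheorem} with $f=1$ (the paper effectively uses Corollary \ref{noexpansion}, exploiting $tr\chi=0$, where you use Corollary \ref{str}), then use Lemma \ref{lemma} together with $\di\kappa=0$ to show that every coefficient of \eqref{killingoperator} is constant along the null generators, and finally settle the extremal kernel with the divergence-form/positive-principal-eigenfunction argument of \cite{hj2012} transported through the bijection \eqref{oa}. One small caveat: the sections of $\mathcal{S}$ and of the auxiliary geodesic foliation do not coincide (their affine parameters differ by $\int\Omega^{-2}$), so your claimed ``pointwise identity of differential operators on each section'' is not literally meaningful as an operator identity; however, the identity on $\vh$ — which is all that Corollary \ref{str} provides (note $\phi\in\vh$ here since $L\phi=0$) and all that the proposition's $\Phi_{v}$-pullback statement requires — does hold and suffices for the transfer.
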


The above proposition can be used in order to retrieve and in fact generalize the conservation law on the event horizon of extremal black holes for all solutions to the linear wave equation  discovered in  \cite{aretakis4,hj2012,murata2012}. This conservation law coupled with dispersive estimates away from the event horizon forces higher order derivatives of generic solutions to the wave equation to blow up asymptotically along the event horizon (see \cite{aretakis1,aretakis3}).  We remark that the previous result is in stark contrast with the subextremal case for which Dafermos and Rodnianski \cite{enadio,tria} have derived quantitative decay estimates for all higher order derivatives in the exterior region up to and including  the event horizon.

 We also remark that  Theorem \ref{maintheorem} and Corollary \ref{corkernel} imply that the aforementioned conservation law holds with respect to \textbf{all} foliations of the event horizon.  Futher applications are presented in a companion paper \cite{aretakisglue}, where it is shown the above properties play a fundamental role in the analysis of the characteristic initial value problem of  the wave equation on Lorentzian manifolds.

\section{Epilogue: elliptic operators on null hypersurfaces}
\label{sec:EpilogueEllipticOperatorsOnNullHypersurfaces}

In the previous sections we have introduced an elliptic operator on a null hypersurface $\hh$ which is covariant under refoliation of $\hh$. This operator plays an important role in the evolution of the wave equation along $\hh$. Since null hypersurfaces are characteristic surfaces of more general (geometric and tensorial) hyperbolic equations, using the methods of the present paper as well as of \cite{aretakisglue}, one expects to be able to derive similar elliptic operators associated to those equations as well. The study of the present paper suggests that the following general definitions and discussion might be relevant. 

We  start with the following definition

\begin{definition}\textbf{(Geometric elliptic operators on $\hh$):} Let $\s=\big(S_{v}\big)_{v\in\mathbb{R}}$ be a foliation of $\hh$. A second order linear operator $\mathcal{A}^{\s}:C^{\infty}(\hh)\rightarrow\mathbb{R}$ is called a \textit{geometric elliptic operator} if it is tangential to the sections $S_{v}$ of $\s$ such that the restriction  \[\mathcal{A}_{v}^{\s}:=\left.\mathcal{A}^{S}\right|_{S_{v}}:C^{\infty}(S_{v})\rightarrow\mathbb{R}\]  is an elliptic operator on the Riemannian manifold $S_{v}$ which depends only on the geometry of the foliation $\s$, i.e~the first and the second fundamental forms of the sections of $\s$ with respect to the ambient manifold $\m$.  
\label{def1}
\end{definition}
An example of a geometric elliptic operator on $\hh$ is the operator $\lapp^{\mathcal{S}}$ which  at each point $p\in\hh$  is given by the induced Laplacian of the (unique) section $S_{v}$ of the foliation $\mathcal{S}$ that passes through the point $p$. In other words, $\lapp^{\s}$ is defined such that $\left.\lapp^{\mathcal{S}}\right|_{S_{v}}=\lapp_{S_{v}}$, for all $v\in\mathbb{R}$, where $\lapp_{S_{v}}$ denotes the induced Laplacian on $S_{v}$.

\medskip

\textbf{Change of foliation and associated operators:} Let $\s=\big\langle S_{0}, L_{geod}, \Omega  \big\rangle$ be a foliation of $\hh$ and $\mathcal{A}^{\s}$ be a geometric elliptic operator. 
Let $\mathcal{S}'=\big\langle S_{0}', L_{geod}', \Omega'  \big\rangle$ be another foliation of $\hh$. Note that $L_{geod}'=f^{2}\cdot L_{geod}$ for some function $f\in C^{\infty}(\hh)$ which is constant along the null generators of $\hh$. One can consider the associated geometric elliptic operator $\mathcal{A}^{\s'}$ defined in an identical way to $\mathcal{A}^{\s}$ by simply replacing the geometric quantities associated to the foliation $\s$ with those of the foliation $\s'$. For example, if $\mathcal{A}^{\s}=\lapp^{\s}+\zeta^{\sharp}\cdot \nabb^{\s}+ (tr\underline{\chi}\cdot tr\chi)\cdot I$, where $\nabb^{\s}$ is the induced gradient on the sections $S_{v}$ of $\s$, $I$ is the identity operator, $\zeta^{\sharp}$ is the torsion and $tr\chi,tr\underline{\chi}$ are the null mean curvatures of $S_{v}$, then one can define the associated operator $\mathcal{A}^{\s'}=\lapp^{\s'}+\big(\zeta^{\sharp}\big)'\cdot \nabb^{\s'}+ (tr\underline{\chi'}\cdot tr\chi')\cdot I$  such that $\left.\lapp^{\s'}\right|_{S'_{v}}=\lapp_{S'_{v}}$, $\left.\nabb^{\s'}\right|_{S'_{v}}=\nabb_{S'_{v}}$ and $(\zeta^{\sharp})',tr\chi',tr\underline{\chi}'$ are the torsion and the null mean curvatures of $S'_{v'}$, respectively. 

One would ideally want to capture elliptic structures on null hypersurfaces by considering geometric elliptic operators $\mathcal{A}^{\s}$ which, however, do \textbf{not} depend, in an appropriate sense, on the choice of the foliation $\s$. However, we have $\mathcal{A}^{\s}\neq \mathcal{A}^{\s'}$. Indeed, at each point $p\in\hh$ the operator $\mathcal{A}^{\s}$ is tangential to $S_{v}$, where $p\in S_{v}$, whereas the operator $\mathcal{A}^{\s'}$ is tangential to $S'_{v'}$, where $p\in S'_{v'}$. 

Therefore, the only way to ``compare'' the operators $\mathcal{A}^{\s}$ and $\mathcal{A}^{\s'}$ is via the flow of the null generators. Given two section $S_{v},S_{v'}$ passing through the point $p$, one can define the diffeomorphism $\Phi_{v}^{v'}:S_{v}\rightarrow S_{v'}'$ such that if $q\in S_{v}$ then $\Phi_{v}^{v'}(q)$ is the intersection of $S'_{v'}$ and the null generator passing through $q$.
 \begin{figure}[H]
   \centering
		\includegraphics[scale=0.109]{nullsection1change.png}
	\label{fig:nullsection11212jlkj12121}
\end{figure}
The diffeomorphisms $\Phi_{v}^{v'}$ allow us to define the foliation-invariant operators as follows
\begin{definition} \textbf{(Foliation-invariant geometric elliptic operators):} Let $\s=\left\langle S_{0}, L_{geod},\Omega\right\rangle$ be a foliation of $\hh$. A geometric elliptic operator $\mathcal{A}^{S}$ is called foliation-invariant at a point $p\in\hh$ if for any foliation $\s'=\left\langle S'_{0}, L_{geod}'=f^{2}\cdot L_{geod},\Omega'\right\rangle$ of $\hh$ we have  
\[\big(\Phi_{v}^{v'}\big)^{*}\mathcal{A}_{v'}^{\s'}=\mathcal{A}_{v}^{\s} \ \text{  at  }p, \]
where the pullback operator $\big(\Phi_{v}^{v'}\big)^{*}\mathcal{A}_{v'}^{\s'}$ is the operator on $S_{v}$  defined such that $\Big(\big(\Phi_{v}^{v'}\big)^{*}\mathcal{A}_{v'}^{\s'}\Big)(\psi)=\mathcal{A}_{v'}^{\s'}\Big(\big(\Phi_{v}^{v'}\big)_{*}\psi\Big),$
for all $\psi\in C^{\infty}\big(S_{v}\big)$. Furthermore, an operator $\mathcal{A}^{\s}$ is called foliation-invariant if it is foliation-invariant at all points of $\hh$.
\label{definitioninvariantfolellhl}
\end{definition}
In other words,   a foliation-invariant geometric elliptic operator $\mathcal{A}$ satisfies the property that for any two foliations $\s=\left\langle S_{0},L_{geod},\Omega\right\rangle$ and $\s'=\left\langle S'_{0}, L_{geod}, \Omega'\right\rangle$  the operators $\mathcal{A}_{v}^{\s},\mathcal{A}_{v'}^{\s'}$ are exactly the same at $p\in S_{v}\cap S'_{v'}$,  modulo identifying the sections $S_{v},S'_{v'}$ via $\Phi_{v}^{v'}$.

We remark that the Laplacian  $\lapp^{\mathcal{S}}$ is foliation-invariant on a null hypersurface $\hh$ if and only if  the induced metrics of  the sections of $\hh$  are \textit{conformal}, i.e.~the shear of $\hh$ vanishes (see  \eqref{laplacians}). Clearly, the conformal equivalence of the sections of $\hh$ is a very restrictive condition on $\hh$. It turns out that the foliation-invariance is a very strong condition to be satisfied for general null hypersurfaces. For this reason we make the following definition
\begin{definition} \textbf{(Foliation-covariant geometric elliptic operators):} Let $\s=\left\langle S_{0}, L_{geod},\Omega\right\rangle$ be a foliation of $\hh$. A geometric elliptic operator $\mathcal{A}^{S}$ is called foliation-covariant at a point $p\in\hh$ if for any foliation $\s'=\left\langle S'_{0}, L_{geod}'=f^{2}\cdot L_{geod},\Omega'\right\rangle$ of $\hh$ we have  
\[\Big(\big(\Phi_{v}^{v'}\big)^{*}\mathcal{A}_{v'}^{\s'}\Big)=f^{2}\cdot\Big(\mathcal{A}_{v}^{\s}\circ M_{f^{-2}} \Big)\ \text{  at  }p, \]
where the pullback operator $\big(\Phi_{v}^{v'}\big)^{*}\mathcal{A}_{v'}^{\s'}$ is the operator on $S_{v}$  defined such that $\Big(\big(\Phi_{v}^{v'}\big)^{*}\mathcal{A}_{v'}^{\s'}\Big)(\psi)=\mathcal{A}_{v'}^{\s'}\Big(\big(\Phi_{v}^{v'}\big)_{*}\psi\Big),$
for all $\psi\in C^{\infty}\big(S_{v}\big)$, and the operator $M_{f^{2}}$ is defined such that $M_{f^{2}}(\psi)=f^{2}\cdot \psi$ for all $\psi\in C^{\infty}\big(S_{v}\big)$. 
Furthermore, an operator $\mathcal{A}^{\s}$ is called foliation-covariant if it is foliation-covariant at all points of $\hh$.
\label{definitioninvariantfolel}
\end{definition}

It would be very interesting to see if foliation-covariant elliptic operators on $\hh$ associated (in the sense of \cite{aretakisglue}; see also \cite{harveyoperator}) to the Maxwell equations and other linear and non-linear hyperbolic equations  exist. 

\section{Acknowledgements}
\label{sec:Acknowledgements}
	I would like to thank Spyros Alexakis,  Mihalis Dafermos, Harvey Reall and Shiwu Yang for several very helpful discussions and insights. I acknowledge support through NSF grant  DMS-1265538.

\bibliographystyle{acm}
\bibliography{../../../../bibliography}

\end{document}